\newtheorem{definition}{Definition}
\newtheorem{theorem}{Theorem}
\newtheorem{lemma}{Lemma}
\DeclareMathOperator*{\argmax}{arg\,max}
\def\BibTeX{{\rm B\kern-.05em{\sc i\kern-.025em b}\kern-.08em
    T\kern-.1667em\lower.7ex\hbox{E}\kern-.125emX}}
\begin{document}

\title{Differentially Private Selection\\ using Smooth Sensitivity
\thanks{Preprint of an article accepted at IEEE IPCCC 2024.}
\thanks{This work was supported by JSPS KAKENHI Grant Numbers 20H05967, 21H05052, 23K28035, and 23K18501 and JSPS Grant-in-Aid for JSPS Fellows Grant Number 23KJ0649.
}
}


\author{\IEEEauthorblockN{Akito Yamamoto}
\IEEEauthorblockA{\textit{Human Genome Center, The Institute of Medical Science,}\\ \textit{The University of Tokyo} \\
Tokyo, Japan \\
a-ymmt@ims.u-tokyo.ac.jp}
\and
\IEEEauthorblockN{Tetsuo Shibuya}
\IEEEauthorblockA{\textit{Human Genome Center, The Institute of Medical Science,}\\ \textit{The University of Tokyo} \\
Tokyo, Japan \\
tshibuya@hgc.jp}}

\maketitle

\begin{abstract}
With the growing volume of data in society, the need for privacy protection in data analysis also rises. In particular, private selection tasks, wherein the most important information is retrieved under differential privacy are emphasized in a wide range of contexts, including machine learning and medical statistical analysis. However, existing mechanisms use {\it global sensitivity}, which may add larger amount of perturbation than is necessary. Therefore, this study proposes a novel mechanism for differentially private selection using the concept of {\it smooth sensitivity} and presents theoretical proofs of strict privacy guarantees. Simultaneously, given that the current state-of-the-art algorithm using {\it smooth sensitivity} is still of limited use, and that the theoretical analysis of the basic properties of the noise distributions are not yet rigorous, we present fundamental theorems to improve upon them. Furthermore, new theorems are proposed for efficient noise generation. Experiments demonstrate that the proposed mechanism can provide higher accuracy than the existing {\it global sensitivity}-based methods. Finally, we show key directions for further theoretical development. Overall, this study can be an important foundational work for expanding the potential of {\it smooth sensitivity} in privacy-preserving data analysis. The Python implementation of our experiments and supplemental results are available at \url{https://github.com/ay0408/Smooth-Private-Selection}.
\end{abstract}

\begin{IEEEkeywords}
differential privacy, smooth sensitivity, private selection, fundamental theory
\end{IEEEkeywords}

\section{Introduction}

As the data flows in society increase, the theory and mechanisms for preserving privacy must be further enhanced in data analysis. In particular, retrieving the most important information from sensitive data while protecting privacy has been emphasized in various fields, including machine learning \cite{11,13,14} and bioinformatics \cite{17,18}. For example, in the context of genome statistical analysis, the private extraction of the positions in the human genome that are most relevant to a disease is extremely important. In such private selection tasks, the concept of differential privacy \cite{1} has been widely utilized; however, further rigorous exploration of its fundamental mechanisms is essential.

All existing methods for differentially private selection use the {\it global sensitivity} of a function \cite{7}; however, this considers the worst-case scenario and may lead to the addition of excessive noise for actual data analysis. A concept that can address this problem is {\it smooth sensitivity} \cite{3}, and in fact, in the context of statistics publication, it has been pointed out that it can provide significantly higher accuracy than {\it global sensitivity} \cite{5}. However, algorithms using {\it smooth sensitivity} have only been developed for numeric queries \cite{3,6}; there is still no mechanism for private selection tasks. In addition, from a fundamental perspective, the existing algorithm for numeric queries can be further deepened, and we can design more flexible algorithms. Moreover, there are remaining issues such as the lack of theorems on efficient noise generation and the absence of rigorous theoretical analysis of the basic properties of noise distributions.

Therefore, in this study, we propose a novel mechanism for differentially private selection using the concept of {\it smooth sensitivity}, along with fundamental theorems for the enrichment of the basic theory. The main contributions of this study are as follows:
\begin{enumerate}
\item We first enhance the state-of-the-art algorithm for numeric queries and present a new theorem for constructing an $\epsilon$-differentially private algorithm more flexibly. This expands the possibilities for algorithms using {\it smooth sensitivity}.

\item With the idea in our enhanced algorithm, we propose the first mechanism for differentially private selection using the concept of {\it smooth sensitivity}, the Smooth Private Selection. Simultaneously, while summarizing the basic definitions regarding {\it smooth sensitivity} in the context of private selection, we also present theoretical proofs of the proposed mechanism's strict privacy guarantees.

\item We then offer new fundamental theorems for efficient noise generation, followed by a more rigorous theoretical analysis of the basic properties of noise distributions than the existing study \cite{3}. This will expand the applicability of differentially private mechanisms using {\it smooth sensitivity}, including our smooth private selection.

\item We experimentally evaluate the utility of the smooth private selection to demonstrate that it can provide higher accuracy than existing {\it global sensitivity}-based methods, using genome statistics that are emphasized in genomic and medical data analysis. Specifically, our mechanism retrieves the most important information with higher probability while satisfying $\epsilon$-differential privacy. 
\end{enumerate}

The omitted proofs and the Python codes used in our experiments are available at \url{https://github.com/ay0408/Smooth-Private-Selection}.

\section{Related Work}

In this section, we briefly summarize the limitations of related studies and clarify the differences between them and this study.

\subsection{Smooth Sensitivity}

The concept of {\it smooth sensitivity} \cite{3} has been widely applied and studied in machine learning and statistical analysis \cite{6,23,24,25}. In particular, it has been suggested that the noise could be reduced compared to using {\it global sensitivity} \cite{6}, and several theorems and methods have been proposed to improve the basic algorithm and efficiently obtain {\it smooth sensitivity} \cite{5}. However, they remain of limited use, and enhancements to the theoretical foundation, including rigorous analysis of the noise distributions are still required. In this study, we address these issues and enrich the fundamental theorems and algorithms regarding {\it smooth sensitivity} as shown by contributions 1 and 3.

\subsection{Differentially Private Selection}

Private selection for retrieving important information while protecting privacy is important in data mining \cite{8} and genomic statistical analysis \cite{13}. All existing methods for differentially private selection use {\it global sensitivity} \cite{7,11}; no mechanism currently uses {\it smooth sensitivity}. We propose the first such mechanism and experimentally show that it can provide higher accuracy than {\it global sensitivity}-based methods while providing theoretical proofs of the privacy guarantees, as shown by contributions 2 and 4. In this study, we consider the pure $\epsilon$-differential privacy setting and do not discuss unknown domain settings such as for stream data \cite{10} or under $(\epsilon, \delta)$-differential privacy \cite{4}. Those aspects are important and will be considered for the further development of our mechanism.

\section{Preliminaries}

In this section, we first review the key concepts regarding differential privacy and {\it smooth sensitivity}, along with detailed issues in existing studies. Thereafter, we summarize the existing mechanisms for differentially private selection and clarify the aim of this study.

\subsection{Differential Privacy}

Differential privacy \cite{1} is a framework that emerged in cryptography to assess the privacy level of an individual's participation information in a dataset. In recent years, it has been widely applied in the context of data management \cite{21,28} and medical data analysis \cite{22}. This concept considers two {\it neighboring} datasets that differ by only one element and aims to create a condition where these datasets are almost indistinguishable. When $d(x,y)$ represents the Hamming distance between $x$ and $y$, that is, $d(x,y) = |\{i:x_i \neq y_i\}|$, the definition of $\epsilon$-differential privacy is as follows:

\begin{definition}
$($$\epsilon$-Differential Privacy \cite{2}$)$ 
\\
A randomized mechanism $M$ is $\epsilon$-differentially private if, for all neighboring datasets $x, y \in D^n$ satisfying $d(x,y) = 1$ and for any $\mathcal{S} \subset range(M)$, 
\begin{eqnarray}
\Pr[M(x)\in \mathcal{S}] \leq e^\epsilon \cdot \Pr[M(y) \in \mathcal{S}]. \nonumber
\end{eqnarray}
\end{definition}

$D^n$ represents the set of all possible datasets containing information on $n$ elements. The parameter $\epsilon\ (> 0)$ evaluates the privacy level, with smaller values representing stronger privacy guarantees. 

The most basic method for satisfying $\epsilon$-differential privacy is to use the {\it global\ sensitivity} of a function. The definition of {\it global\ sensitivity} is as follows:

\begin{definition}
$($Global Sensitivity \cite{2}$)$
\\
The global sensitivity of a function $f: D^n \rightarrow  \mathbb{R}^d$ is 
\begin{eqnarray}
GS_f = \max_{x,y:d(x,y)=1}\|f(x) - f(y)\|_1. \nonumber
\end{eqnarray}
\end{definition}

The {\it global sensitivity} is a constant value that does not depend on datasets. However, {\it global sensitivity} considers all possible {\it neighboring} datasets, and the value is often based on extremely rare situations in practical cases. Notably, in the context of statistics publication, the addition of more noise than necessary makes achieving high accuracy difficult \cite{5}. The concept of {\it smooth sensitivity} \cite{3} contributes to solving this problem.

\subsubsection{Smooth Sensitivity}

The concept of {\it smooth sensitivity} is based on {\it local sensitivity} \cite{3,27}, which represents the maximum change with respect to an input dataset. The definition is as follows:

\begin{definition}
$($Local Sensitivity \cite{3}$)$
\\
For a function $f: D^n \rightarrow \mathbb{R}^d$ and a dataset $x \in D^n$, the local sensitivity of $f$ at $x$ is
\begin{eqnarray}
LS_f(x) = \max_{y:d(x,y)=1} \|f(x)-f(y)\|_1. \nonumber
\end{eqnarray}
\end{definition}

In contrast to {\it global sensitivity}, {\it local sensitivity} depends on the relevant dataset and easy to compute, and $\forall x: LS_f(x) \leq GS_f$ holds. Because the noise scale also varies among different datasets, the following concept of {\it smooth upper bound} \cite{3} is necessary to satisfy differential privacy.

\begin{definition}
$($$\beta$-Smooth Upper Bound on LS \cite{3}$)$
\\
For $\beta > 0$, a function $S : D^n \rightarrow \mathbb{R}$ is a $\beta$-smooth upper bound on the local sensitivity of $f$ if it satisfies the following requirements:
\begin{eqnarray}
\forall x \in D^n: && S(x) \geq LS_f(x); \nonumber\\
\forall x,y \in D^n, d(x,y)=1: && S(x) \leq e^{\beta} \cdot S(y). \nonumber 
\end{eqnarray}
\end{definition}

A special case of {\it smooth upper bound} is {\it smooth sensitivity}, which is defined by

\begin{definition}
$($$\beta$-Smooth Sensitivity \cite{3}$)$
\\
For $\beta > 0$, the $\beta$-smooth sensitivity of $f$ at $x$ is
\begin{eqnarray}
S_{f,\beta}^*(x) = \max_{y \in D^n} \left( LS_f(y) \cdot e^{-\beta \cdot d(y,x)} \right). \nonumber
\end{eqnarray}
\end{definition}

For every $\beta$-{\it smooth upper bound} $S$, $\forall x: S_{f,\beta}^*(x) \leq S(x)$ holds \cite{3}. The larger $\beta$ is, the closer $S_{f,\beta}^*(x)$ is to $LS_f(x)$, and the computation time can be reduced \cite{5}. However, there is still no general method for efficient computation when $\beta$ is small; therefore, we present new theorems in Section IV.C.

When using {\it smooth sensitivity} or {\it smooth upper bound}, noise distributions to satisfy differential privacy include $(\alpha, \beta)$-{\it admissible} distributions. The definition of $(\alpha, \beta)$-{\it admissible} under $\epsilon$-differential privacy is as follows:

\begin{definition}
$($$(\alpha, \beta)$-Admissible \cite{3}$)$
\\
A probability distribution on $\mathbb{R}^d$, given by a density function $h$, is $(\alpha, \beta)$-admissible if, for $\alpha = \alpha(\epsilon)$, $\beta = \beta(\epsilon)$, the following two conditions hold for all $\Delta \in \mathbb{R}^d$ and $\lambda \in \mathbb{R}$ satisfying $\|\Delta\|_1 \leq \alpha$ and $|\lambda| \leq \beta$ and for all measurable subsets $\mathcal{S} \subseteq \mathbb{R}^d$:
\begin{eqnarray}
Sliding \ Property: \hspace{-0.4cm }&& \Pr_{Z \sim h} [Z \in \mathcal{S}] \leq e^{\frac{\epsilon}{2}} \cdot \Pr_{Z \sim h} [Z \in \mathcal{S} + \Delta]; \nonumber\\
Dilation \ Property: \hspace{-0.4cm} && \Pr_{Z \sim h} [Z \in \mathcal{S}] \leq e^{\frac{\epsilon}{2}} \cdot \Pr_{Z \sim h} [Z \in e^{\lambda} \cdot \mathcal{S}]. \nonumber
\end{eqnarray}
\end{definition}

Using an $(\alpha, \beta)$-{\it admissible} distribution, we can construct an $\epsilon$-differentially private algorithm for numeric queries from the following lemma. 

\begin{lemma}
$($$\epsilon$-differentially private algorithm using a smooth upper bound \cite{3}$)$
\\
Let $h$ be an $(\alpha,\beta)$-admissible noise probability density function, and set $\alpha = \alpha(\epsilon)$ and $\beta = \beta(\epsilon)$. Let $Z$ be a random variable derived from $h$. For a function $f: D^n \rightarrow \mathbb{R}^d$, let $S: D^n \rightarrow \mathbb{R}$ be a $\beta$-smooth upper bound on the local sensitivity of $f$. Thereafter, the algorithm $A(x) = f(x) + \frac{S(x)}{\alpha} \cdot Z$ is $\epsilon$-differentially private.
\end{lemma}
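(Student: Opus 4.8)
The plan is to show directly that for any two neighboring datasets $x, y$ with $d(x,y)=1$ and any measurable $\mathcal{S} \subseteq \mathbb{R}^d$, the ratio $\Pr[A(x) \in \mathcal{S}] / \Pr[A(y) \in \mathcal{S}]$ is bounded by $e^\epsilon$, and to obtain the factor $e^\epsilon$ by splitting it as $e^{\epsilon/2} \cdot e^{\epsilon/2}$, using the sliding property for the first half and the dilation property for the second. First I would fix $x,y$ and write $A(x) = f(x) + \frac{S(x)}{\alpha} Z$, so that $\Pr[A(x)\in\mathcal{S}] = \Pr_{Z\sim h}\bigl[Z \in \frac{\alpha}{S(x)}(\mathcal{S} - f(x))\bigr]$, and similarly for $A(y)$. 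The goal is to transform the set $\frac{\alpha}{S(x)}(\mathcal{S}-f(x))$ into $\frac{\alpha}{S(y)}(\mathcal{S}-f(y))$ through one translation and one scaling, checking that each transformation stays within the parameter budget $\|\Delta\|_1 \le \alpha$, $|\lambda| \le \beta$ required by $(\alpha,\beta)$-admissibility.

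The key steps, in order: (1) Reduce to a statement about $h$ on the transformed sets as above. (2) Introduce an intermediate set, e.g.\ first handle the scaling: since $S$ is a $\beta$-smooth upper bound, $e^{-\beta} \le S(x)/S(y) \le e^{\beta}$, so $\lambda := \ln(S(y)/S(x))$ satisfies $|\lambda| \le \beta$, and the dilation property lets me pass between a set scaled by $\alpha/S(x)$ and one scaled by $\alpha/S(y)$ at the cost of $e^{\epsilon/2}$. (3) Handle the translation: the difference in the "centers" $f(x)$ and $f(y)$, after rescaling by $\alpha/S(\cdot)$, has $\ell_1$-norm at most $\frac{\alpha}{S(\cdot)}\|f(x)-f(y)\|_1 \le \frac{\alpha}{S(\cdot)} LS_f(\cdot) \le \alpha$, since $S$ dominates the local sensitivity; so this shift is a legal $\Delta$ with $\|\Delta\|_1 \le \alpha$, and the sliding property costs another $e^{\epsilon/2}$. (4) Compose the two bounds to get the overall factor $e^{\epsilon/2}\cdot e^{\epsilon/2} = e^\epsilon$, and conclude.

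The main obstacle is bookkeeping the order of operations cleanly: a translation followed by a scaling is not the same as a scaling followed by a translation, so I must be careful about which set plays the role of $\mathcal{S}$ in each application of the admissibility properties and verify that the residual translation vector, measured in the correct (rescaled) coordinates, really does have $\ell_1$-norm bounded by $\alpha$ — this is where the inequality $\|f(x)-f(y)\|_1 \le LS_f(x) \le S(x)$ is used, and one must apply it with the right dataset ($x$ or $y$) matching the scaling factor currently in force. A secondary point to state carefully is that the bound must hold for \emph{all} measurable $\mathcal{S}$ and in \emph{both} directions ($x \leftrightarrow y$), but by symmetry of the neighboring relation and of the argument this follows immediately once the one-directional bound is established for arbitrary $\mathcal{S}$. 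Everything else is a routine change of variables in the density $h$.
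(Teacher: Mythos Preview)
Your plan is correct and matches the paper's approach essentially line for line: the paper (in proving the more general Theorem~1, which specializes to this lemma at $k=1$) also rewrites $\Pr[A(x)\in\mathcal{S}]$ as $\Pr_{Z\sim h}\bigl[Z\in\frac{\mathcal{S}-f(x)}{S(x)/\alpha}\bigr]$, then passes to the analogous expression for $y$ by one application of sliding and one of dilation, justifying the sliding step via $\|f(x)-f(y)\|_1\le LS_f(x)\le S(x)$ and the dilation step via $|\ln(S(x)/S(y))|\le\beta$. The only cosmetic difference is that the paper applies sliding first and dilation second, whereas you sketch dilation first and sliding second; either order works, and you have correctly flagged that the choice of order dictates whether the translation bound must be verified with $S(x)$ or $S(y)$ in the denominator.
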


In this lemma, $S(x)$ and $\alpha$ can be considered to have a privacy budget of $\epsilon/2$ each (with reference to Definition 6). However, it is not necessary to distribute the same budget, and accuracy could be improved by distributing more to one of them. For example, if the effect of reducing $S(x)$ by increasing $\beta$ is greater than the effect of increasing $\alpha$, it would be preferable to assign a greater privacy budget to $\beta$, that is, the numerator of $\frac{S(x)}{\alpha}$. Based on this discussion, we extend Lemma 1 and present a new fundamental theorem to construct a more flexible algorithm in Section IV.A.

In Section IV.D, we also provide a detailed analysis of noise distributions. The $(\alpha,\beta)$-{\it admissible} distributions mentioned in the literature for satisfying $\epsilon$-differential privacy are the distributions with density $h(z) \propto \frac{1}{1 + |z|^\gamma}$ $(\gamma > 1)$ \cite{3,6}. Of these, for the standard Cauchy distribution with $\gamma = 2$, a rigorous analysis has been conducted of $(\alpha, \beta)$ \cite{5}; however, the other distributions still lack sufficient analysis. Therefore, we present a more rigorous $(\alpha, \beta)$ than existing studies for the distributions with general $\gamma > 1$. This enables the amount of noise to be further reduced.

\subsection{Differentially Private Selection}

Differentially private methods can be constructed for private selection tasks as well as for numeric queries. Differentially private selection has been studied in various areas of data mining and statistical analysis \cite{8,9,10,15}, and the exponential mechanism \cite{7} and permute-and-flip \cite{11} are state-of-the-art methods. In these methods, the desirability of each candidate being selected is represented as a score function, and $\epsilon$-differential privacy is satisfied by utilizing the {\it global sensitivity} of the function. The definition of {\it global sensitivity} for private selection is as follows:

\begin{definition}
$($Global Sensitivity for Private Selection \cite{7}$)$
\\
The global sensitivity of a score function $u: D^n \times \mathcal{R} \rightarrow  \mathbb{R}$ is 
\begin{eqnarray}
{GS}_{u,\mathcal{R}} = \max_{r \in \mathcal{R}}\max_{x,y: d(x,y) = 1} |u(x,r) - u(y,r)|, \nonumber
\end{eqnarray}
\end{definition}

As a score function, other than based on the original function $f$, several studies have proposed the SHD score representing the distance from a threshold \cite{15} and a function that utilizes the {\it local sensitivity} of $f$ \cite{16}.

\subsubsection{Exponential Mechanism}

The exponential mechanism \cite{7} is the most fundamental mechanism for differentially private selection and is defined as follows:
\begin{definition}
$($The Exponential Mechanism \cite{7}$)$
\\
The exponential mechanism selects and outputs a candidate $r \in \mathcal{R}$ with probability proportional to $\exp{\left( \frac{\epsilon \cdot u(x,r)}{2 \cdot GS_{u,\mathcal{R}}} \right)}$.
\end{definition}
The exponential mehanism can be realized by adding noise according to a Gumbel distribution and is identical to the mechanism $M_{EM}$ such that 
\begin{eqnarray}
M_{EM}(x) = \argmax_{r \in \mathcal{R}} \left\{ u(x,r) + \mathrm{Gb}_r \left(\frac{2 \cdot GS_{u,\mathcal{R}}}{\epsilon}\right) \right\}, \nonumber
\end{eqnarray}
where $\mathrm{Gb}_r(\eta)$ are i.i.d. samples from the Gumbel distribution with density function $h(z) = \frac{1}{\eta} \exp\left( - \frac{z}{\eta} - \exp\left( -\frac{z}{\eta} \right) \right)$ \cite{12}.

\subsubsection{Permute-and-Flip}

The outcome of the permute-and-flip is mathematically guaranteed never to be worse than that of the exponential mechanism \cite{11}. This procedure combines a random permutation and coin flips and can be constructed as Algorithm 1.

\begin{algorithm}
    \caption{The Permute-and-Flip \cite{11}}
    \begin{algorithmic}[1]
        \STATE $u_* = \max_{r \in \mathcal{R}} u(x,r)$
        \FOR{$r$ in $RandomPermutation(\mathcal{R})$}
        \STATE $p_r = \exp\left( \frac{\epsilon}{2 \cdot GS_{u,\mathcal{R}}} (u(x,r) - u_*) \right)$
        \IF{$Bernoulli(p_r)$}
        \RETURN $r$
        \ENDIF
        \ENDFOR
    \end{algorithmic}
\end{algorithm}

Similar to the exponential mechanism, the permute-and-flip can be realized by adding noise according to an exponential distribution and is identical to the mechanism $M_{PF}$ such that 
\begin{eqnarray}
M_{PF}(x) = \argmax_{r \in \mathcal{R}} \left\{ u(x,r) + \mathrm{Expo}_r \left( \frac{\epsilon}{2 \cdot GS_{u, \mathcal{R}}} \right) \right\}, \nonumber
\end{eqnarray}
where $\mathrm{Expo}_r(\lambda)$ are i.i.d. samples from the exponential distribution with density function $h(z) = \lambda e^{-\lambda z}\ (z \geq 0)$ \cite{12}.\\

There have been several studies on extending the above mechanisms \cite{13,14,16}; however, all of them are based on the {\it global sensitivity} of a score function, and there is still no method using {\it smooth sensitivity} for private selection. As in the case for numeric queries, {\it smooth sensitivity} can potentially reduce noise and significantly increase accuracy. Therefore, in this study, we propose a novel mechanism for private selection using {\it smooth sensitivity}, along with new fundamental theorems.

\section{Theoretical Contributions}

Herein, we first present a more flexible differentially private algorithm using {\it smooth upper bound}. Then, based on the discussion on our algorithm, we propose a novel method for private selection using the concept of {\it smooth sensitivity}, the Smooth Private Selection. Subsequently, we present new theorems on the efficient computation of {\it smooth sensitivity} and {\it smooth upper bound}. Furthermore, we perform a more detailed analysis of $(\alpha, \beta)$-{\it admissible} property of the noise distributions than existing studies, which is expected to expand the applicability of differentially private algorithms using {\it smooth sensitivity}.

\subsection{Enhanced Differentially Private Algorithm using Smooth Sensitivity}

By extending Lemma 1, we present in Theorem 1 a more general procedure for constructing an $\epsilon$-differentially private algorithm using a {\it smooth upper bound}.

\begin{theorem}
$($Enhanced $\epsilon$-differentially private algorithm using a smooth upper bound$)$
\\
Let $h$ be an $(\alpha,\beta)$-admissible noise probability density function, and set $\alpha' = \alpha(k \epsilon)$ and $\beta' = \beta( (2-k) \epsilon)$. Let $Z$ be a random variable derived from $h$. For a function $f: D^n \rightarrow \mathbb{R}^d$, let $S: D^n \rightarrow \mathbb{R}$ be a $\beta'$-smooth upper bound on the local sensitivity of $f$. Thereafter, the algorithm $A(x) = f(x) + \frac{S(x)}{\alpha'} \cdot Z$ is $\epsilon$-differentially private for any $k \in (0,2)$.
\end{theorem}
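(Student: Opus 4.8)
The plan is to adapt the proof of Lemma 1, carrying the two invocations of the admissibility properties with the asymmetric split of the privacy budget. Fix neighboring $x,y\in D^n$ with $d(x,y)=1$ and a measurable set $\mathcal{T}\subseteq\mathbb{R}^d$. First I would dispose of the degenerate case $S(x)=0$: applying the smooth-upper-bound inequality with the roles of $x$ and $y$ exchanged (legitimate since $d(y,x)=1$) gives $S(y)\le e^{\beta'}S(x)=0$, and since $0\le LS_f(x)\le S(x)=0$ we get $\|f(x)-f(y)\|_1\le LS_f(x)=0$, so $A(x)$ and $A(y)$ are the same point mass at $f(x)=f(y)$ and the inequality is immediate. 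Assuming $S(x)>0$, the change-of-variables identity $\Pr[A(x)\in\mathcal{T}]=\Pr_{Z\sim h}\bigl[Z\in\tfrac{\alpha'}{S(x)}(\mathcal{T}-f(x))\bigr]$ holds, and likewise for $y$; write $\phi_x(\mathcal{T})=\tfrac{\alpha'}{S(x)}(\mathcal{T}-f(x))$ and $\phi_y(\mathcal{T})=\tfrac{\alpha'}{S(y)}(\mathcal{T}-f(y))$.

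Next I would relate $\phi_x(\mathcal{T})$ and $\phi_y(\mathcal{T})$ through the set identity $\phi_x(\mathcal{T})=e^{\lambda}\,\phi_y(\mathcal{T})+\Delta$, where $\lambda=\ln\tfrac{S(y)}{S(x)}$ and $\Delta=\tfrac{\alpha'}{S(x)}(f(y)-f(x))$ — that is, a dilation of $\phi_y(\mathcal{T})$ by $e^{\lambda}$ followed by a translation by $\Delta$. Two estimates make this usable: from $S(x)\ge LS_f(x)\ge\|f(x)-f(y)\|_1$ we get $\|\Delta\|_1=\tfrac{\alpha'}{S(x)}\|f(x)-f(y)\|_1\le\alpha'=\alpha(k\epsilon)$, and from the $\beta'$-smooth-upper-bound inequality in both directions we get $e^{-\beta'}\le\tfrac{S(y)}{S(x)}\le e^{\beta'}$, i.e.\ $|\lambda|\le\beta'=\beta((2-k)\epsilon)$. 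I would then peel off the translation with the Sliding Property of Definition 7 at privacy parameter $k\epsilon$ — applied to the set $e^{\lambda}\phi_y(\mathcal{T})+\Delta$ with shift vector $-\Delta$, valid because $\|-\Delta\|_1\le\alpha(k\epsilon)$ — obtaining $\Pr[Z\in\phi_x(\mathcal{T})]\le e^{k\epsilon/2}\,\Pr[Z\in e^{\lambda}\phi_y(\mathcal{T})]$, and then peel off the dilation with the Dilation Property at privacy parameter $(2-k)\epsilon$ — applied to $e^{\lambda}\phi_y(\mathcal{T})$ with dilation $-\lambda$, valid because $|-\lambda|\le\beta((2-k)\epsilon)$ — obtaining $\Pr[Z\in e^{\lambda}\phi_y(\mathcal{T})]\le e^{(2-k)\epsilon/2}\,\Pr[Z\in\phi_y(\mathcal{T})]$. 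Chaining the two bounds and using the two change-of-variables identities gives $\Pr[A(x)\in\mathcal{T}]\le e^{k\epsilon/2}e^{(2-k)\epsilon/2}\,\Pr[A(y)\in\mathcal{T}]=e^{\epsilon}\Pr[A(y)\in\mathcal{T}]$, which is the claim; the hypothesis $k\in(0,2)$ is exactly what is needed so that both $k\epsilon$ and $(2-k)\epsilon$ are positive, hence admissible, privacy parameters, and $k=1$ recovers Lemma 1.

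I expect the only delicate point to be the bookkeeping of directions: the Sliding and Dilation properties are one-sided inequalities, so each must be invoked on the correct (already-transformed) set and with the sign of the shift $\Delta$, resp.\ the log-ratio $\lambda$, chosen so that the two applications telescope back to $\phi_y(\mathcal{T})$. Relatedly, the decomposition has to be in the order ``dilation then translation'' — so that the translation is peeled first — because then $\Delta$ is scaled by $\alpha'/S(x)$, which $LS_f(x)\le S(x)$ controls; performing the translation inside the dilation would scale the shift by $\alpha'/S(y)$ and degrade the bound by a factor $e^{\beta'}$. Once these choices are fixed, the argument is a direct reparametrization of the known proof, the genuinely new ingredient being that Definition 7 lets the two property invocations draw on independent privacy parameters $k\epsilon$ and $(2-k)\epsilon$ whose halves sum to $\epsilon$.
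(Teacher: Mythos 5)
Your proposal is correct and follows essentially the same route as the paper's proof: change of variables to reduce to a statement about $Z$, then one application of the sliding property at budget $k\epsilon$ (controlled by $\|f(x)-f(y)\|_1 \le LS_f(x)\le S(x)$) followed by one application of the dilation property at budget $(2-k)\epsilon$ (controlled by $|\ln(S(x)/S(y))|\le\beta'$), in the same order. Your explicit affine decomposition $\phi_x(\mathcal{T})=e^{\lambda}\phi_y(\mathcal{T})+\Delta$ and the treatment of the degenerate case $S(x)=0$ are just slightly more careful bookkeeping of what the paper does implicitly (note only that the admissibility definition is the paper's Definition~6, not~7).
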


\begin{proof}
It is sufficient to show that, for all neighboring datasets $x,y \in D^n$ and for any $\mathcal{S} \subset range(A)$, 
\begin{eqnarray}
\Pr[A(x) \in \mathcal{S}] \leq e^\epsilon \cdot \Pr[A(y) \in \mathcal{S}]. \nonumber
\end{eqnarray}
Denote $\frac{S(x)}{\alpha'}$ by $N(x)$. Then, 
\begin{eqnarray}
\Pr[A(x) \in \mathcal{S}] &=& \Pr_{Z \sim h}[f(x) + N(x) \cdot Z \in \mathcal{S}] \nonumber\\
&=& \Pr_{Z \sim h}\left[Z \in \frac{\mathcal{S} - f(x)}{N(x)}\right] \nonumber\\
&\leq& e^{\frac{k \epsilon}{2}} \cdot \Pr_{Z \sim h}\left[Z \in \frac{\mathcal{S} - f(y)}{N(x)}\right] \nonumber\\
&\leq& e^{\frac{k \epsilon}{2}} \cdot e^{\frac{(2-k)\epsilon}{2}} \cdot \Pr_{Z \sim h}\left[Z \in \frac{\mathcal{S} - f(y)}{N(y)}\right] \nonumber\\
&=& e^\epsilon \cdot \Pr[A(y) \in \mathcal{S}] \nonumber
\end{eqnarray}
because 
\begin{eqnarray}
\frac{\|f(x) - f(y)\|_1}{N(x)} &=& \alpha' \cdot \frac{\|f(x) - f(y)\|_1}{S(x)} \nonumber\\
&\leq& \alpha' \cdot \frac{\|f(x) - f(y)\|_1}{LS_f(x)} \leq \alpha' \nonumber\\
&& \ \ \ \ \ \ \ \ \ \ \ \ \ \ \ \ \ \ \mathrm{(Sliding\ Property)}\nonumber\\
\mathrm{and} \ \ \ \ \left| \ln\frac{N(x)}{N(y)} \right| &=& \left| \ln\frac{S(x)}{S(y)} \right| \leq \beta' \nonumber\\
&& \ \ \ \ \ \ \ \ \ \ \ \ \ \ \ \ \ \ \mathrm{(Dilation\ Property)}. \nonumber
\end{eqnarray}
\end{proof}

The algorithm in Theorem 1 is superior to the existing one in Lemma 1 because the values of $\alpha'$ and $\beta'$ can be varied flexibly. In other words, when constructing an $\epsilon$-differentially private method using {\it smooth sensitivity}, the same privacy budgets need not be assigned to $\alpha$ and $\beta$, as mentioned in Section III.A. With this in mind, we propose a novel differentially private method using {\it smooth sensitivity} for private selection tasks.

\subsection{Differentially Private Selection using Smooth Sensitivity}

We first summarize the definitions of {\it local sensitivity}, {\it smooth upper bound}, and {\it smooth sensitivity} for private selection, with reference to existing studies \cite{3,16}. The discussion in this subsection is based on these definitions.

\begin{definition}
$($Local Sensitivity for Private Selection$)$
\\
For a score function $u: D^n \times \mathcal{R} \rightarrow  \mathbb{R}$ and a dataset $x \in D^n$, the local sensitivity of $u$ at $x$ for $r \in \mathcal{R}$ is
\begin{eqnarray}
{LS}_{u,\mathcal{R}}(x, r) = \max_{y: d(x,y) = 1} |u(x,r) - u(y,r)|, \nonumber
\end{eqnarray}
\end{definition}

\begin{definition}
$($$\beta$-Smooth Upper Bound on LS for Private Selection$)$
\\
For $\beta > 0$, a function $S: D^n \times \mathcal{R} \rightarrow \mathbb{R}$ is a $\beta$-smooth upper bound on the local sensitivity of a score function $u: D^n \times \mathcal{R} \rightarrow  \mathbb{R}$ if it satisfies the following requirements for any $r \in \mathcal{R}$:
\begin{eqnarray}
\forall x \in D^n: && S(x,r) \geq LS_{u,\mathcal{R}}(x,r); \nonumber\\
\forall x,y \in D^n, d(x,y)=1: && S(x,r) \leq e^{\beta} \cdot S(y,r). \nonumber 
\end{eqnarray}
\end{definition}

\begin{definition}
$($$\beta$-Smooth Sensitivity for Private Selection$)$
\\
For $\beta > 0$, the $\beta$-smooth sensitivity of a score function $u: D^n \times \mathcal{R} \rightarrow  \mathbb{R}$ at $x$ for $r \in \mathcal{R}$ is
\begin{eqnarray}
S_{u,\mathcal{R},\beta}^*(x,r) = \max_{y \in D^n} \left( LS_{u,\mathcal{R}}(y,r) \cdot e^{-\beta \cdot d(y,x)} \right). \nonumber
\end{eqnarray}
\end{definition} 

In this study, we consider two cases in which the noise distributions are respectively two-sided and one-sided. For example, the Gumbel distribution for the exponential mechanism is two-sided, and the exponential distribution for the permute-and-flip is one-sided. For both cases, we define a novel differentially private selection mechanism using the concept of {\it smooth sensitivity}, the Smooth Private Selection, as follows:
\begin{definition}
$($The Smooth Private Selection$)$
\\
Let $h$ be an $(\alpha, \beta)$-admissible noise probability density function, and set $\alpha' = \alpha(k \epsilon)$ and $\beta' = \beta(l \epsilon)$. Given a score function $u: D^n \times \mathcal{R} \rightarrow \mathbb{R}$, the smooth private selection mechanism is defined as 
\begin{eqnarray}
M_{SPS}(x) = \argmax_{r \in \mathcal{R}} \left\{ u(x,r) + \frac{\max_{r \in \mathcal{R}} S(x,r)}{\alpha'} \cdot Z_r \right\}, \nonumber
\end{eqnarray}
where $S$ is an $\beta'$-smooth upper bound on the local sensitivity of $u$, and $Z_r$ are i.i.d. random variables derived from $h$. Note that $k > 0$ and $l > 0$, and if $|M_{SPS}(x)| > 1$, return only the first element of the set. Furthermore, when $h(z) > h(z') > 0$ holds for all $z,z'$ satisfying $0 \leq z < z'$, $Z_r$ can be i.i.d. random variables derived from the density function $g$ such that
\begin{eqnarray}
g(z) = \begin{cases}
    0 \ \ \ \ \ \ (z < 0)\\
    \frac{h(z)}{\int_{w \geq 0} h(w) \mathrm{d}w} \ \ (z \geq 0)
\end{cases}. \nonumber
\end{eqnarray}
\end{definition}

The part of adding noise in this mechanism is inspired by our Theorem 1. Here, note that $h$ represents a two-sided distribution because of the sliding property in Definition 6, and $g$ represents a one-sided distribution. The achieved privacy level depends on whether the noise distribution is two-sided or one-sided. The following discussion details each case.

\subsubsection{Two-Sided Noise}

We first consider the case of two-sided noise distribution. The privacy guarantee of the smooth private selection for this case is provided by the following theorem:
\begin{theorem}
If $Z_r$ are derived from $h$, the smooth private selection mechanism satisfies $\left(\left( k + \frac{|\mathcal{R}|}{2} \cdot l \right) \cdot \epsilon\right)$-differential privacy.
\end{theorem}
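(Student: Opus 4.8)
The plan is to bound $\Pr[M_{SPS}(x) = r^*]$ against $\Pr[M_{SPS}(y) = r^*]$ for neighboring $x,y$ and an arbitrary fixed output $r^*$, by conditioning on the values of the noise variables $Z_r$ for $r \neq r^*$ and integrating over $Z_{r^*}$. Write $N(x) = \frac{\max_{r}S(x,r)}{\alpha'}$. The event $\{M_{SPS}(x)=r^*\}$ is $\{u(x,r^*) + N(x) Z_{r^*} \geq u(x,r) + N(x) Z_r \text{ for all } r\}$, i.e. $Z_{r^*} \geq \max_{r \neq r^*}\bigl( Z_r + \frac{u(x,r)-u(x,r^*)}{N(x)} \bigr)$. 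So conditioned on $(Z_r)_{r\neq r^*}$, this is a one-dimensional tail event $\{Z_{r^*} \in \mathcal{T}_x\}$ for a threshold interval $\mathcal{T}_x = [t_x,\infty)$, and I need to compare $\Pr_{Z_{r^*}\sim h}[Z_{r^*}\in\mathcal{T}_x]$ with $\Pr_{Z_{r^*}\sim h}[Z_{r^*}\in\mathcal{T}_y]$ where $\mathcal{T}_y$ has a shifted-and-scaled threshold reflecting $u(y,\cdot)$ and $N(y)$.

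The key step is to move from $\mathcal{T}_x$ to $\mathcal{T}_y$ using the sliding and dilation properties of Definition 6, exactly as in the proof of Theorem 1, but now I must track the contributions from \emph{all} candidates. First apply the sliding property once to absorb the change $u(x,\cdot)\to u(y,\cdot)$ in the threshold; the total shift in the argument of $Z_{r^*}$ is at most $\sum_{r\neq r^*} \frac{|u(x,r)-u(x,r^*) - (u(y,r)-u(y,r^*))|}{N(x)}$ — wait, more carefully, I should apply sliding coordinate-by-coordinate as I rewrite the threshold, and each of the $|\mathcal{R}|$ score differences $\frac{|u(x,r)-u(y,r)|}{N(x)} \leq \frac{LS_{u,\mathcal{R}}(x,r)}{\max_r S(x,r)}\cdot\alpha' \leq \alpha'$ costs a factor $e^{k\epsilon/2}$. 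That is where the $|\mathcal{R}|$ appears, though the theorem claims only $\frac{|\mathcal{R}|}{2}\cdot l$ extra from the dilation side, so I expect the $k\epsilon$ term (not $\frac{|\mathcal{R}|}{2}k\epsilon$) comes from handling the shift in $Z_{r^*}$ and the shifts in the conditioned $Z_r$ together as a single $\|\cdot\|_1$-bounded perturbation of the whole vector — reorganizing so that sliding is applied once in $\mathbb{R}^{|\mathcal{R}|}$ with total $\ell_1$-displacement $\leq \alpha'$ rather than $|\mathcal{R}|$ times. Then the dilation property handles the rescaling $N(x)\to N(y)$; since the threshold depends on the other $Z_r$'s too, rescaling affects all $|\mathcal{R}|$ coordinates, and $\bigl|\ln\frac{N(x)}{N(y)}\bigr| = \bigl|\ln\frac{\max_r S(x,r)}{\max_r S(y,r)}\bigr|\leq\beta'$ by the smooth-upper-bound condition — but the per-coordinate dilation cost $e^{l\epsilon/2}$ over $|\mathcal{R}|$ coordinates gives the $\frac{|\mathcal{R}|}{2}l\epsilon$ term.

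The main obstacle will be the bookkeeping of which property is applied to which coordinate and in which order, so that the exponents add up to exactly $k + \frac{|\mathcal{R}|}{2}l$ rather than something larger like $\frac{|\mathcal{R}|}{2}(k+l)$. I would handle this by first reducing to a clean $|\mathcal{R}|$-dimensional formulation: let $W = (Z_r)_{r\in\mathcal{R}} \sim h^{\otimes|\mathcal{R}|}$, express $\{M_{SPS}(x)=r^*\}$ as $\{W \in \mathcal{S}_x\}$ for an explicit measurable set $\mathcal{S}_x$, observe $\mathcal{S}_y$ is obtained from $\mathcal{S}_x$ by a translation of $\ell_1$-norm $\leq\alpha'$ composed with a uniform scaling by $e^{\lambda}$, $|\lambda|\leq\beta'$, and then note that $h^{\otimes|\mathcal{R}|}$ inherits a product sliding property with a single factor $e^{k\epsilon/2}$ (since the $\ell_1$-displacement is bounded \emph{in aggregate}) and a product dilation property with factor $e^{|\mathcal{R}|\cdot l\epsilon/2}$ (since each coordinate is independently scaled). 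Multiplying the two factors and invoking $\alpha'=\alpha(k\epsilon)$, $\beta'=\beta(l\epsilon)$ with Definition 6 applied at privacy parameters $k\epsilon$ and $l\epsilon$ respectively yields the bound $e^{(k+|\mathcal{R}|l/2)\epsilon}$, and the final remark in Definition 12 about returning the first element only restricts the output set, which cannot increase either probability, so the inequality is preserved.
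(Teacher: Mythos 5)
There is a genuine gap at the central step. Your plan reduces the event $\{M_{SPS}(\cdot)=r^*\}$ to a set $\mathcal{S}_x\subseteq\mathbb{R}^{|\mathcal{R}|}$ and asserts that $\mathcal{S}_y$ is obtained from $\mathcal{S}_x$ by a uniform scaling composed with a translation of $\ell_1$-norm at most $\alpha'$, so that a \emph{single} product-sliding application costs only $e^{k\epsilon/2}$. That translation bound is false. Writing $\mathcal{S}_x=\{w: w_{r^*}-w_r\geq (u(x,r)-u(x,r^*))/N(x)\ \forall r\}$, matching it to $\mathcal{S}_y$ requires shifting \emph{each} coordinate $r$ by roughly $(u(y,r)-u(x,r))/N(y)$, and each of these can individually be as large as $\alpha'$ (up to the factor $e^{\beta'}$), since the only available control is $|u(x,r)-u(y,r)|\leq S(x,r)\leq \max_r S(x,r)$ per candidate. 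The aggregate $\ell_1$-displacement is therefore of order $|\mathcal{R}|\cdot\alpha'$, and the product sliding property then costs $e^{|\mathcal{R}|k\epsilon/2}$ --- which lands you exactly at the $\frac{|\mathcal{R}|}{2}(k+l)\epsilon$ overcount you were trying to avoid. (There is also an arithmetic slip at the end: $e^{k\epsilon/2}\cdot e^{|\mathcal{R}|l\epsilon/2}$ equals $e^{(k/2+|\mathcal{R}|l/2)\epsilon}$, not the claimed $e^{(k+|\mathcal{R}|l/2)\epsilon}$.)

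The missing idea, which is how the paper gets the $|\mathcal{R}|$-independent sliding cost, is to exploit the monotone structure of the non-winning coordinates rather than the sliding property. Writing $\Pr[M(\cdot)=r^*]$ as an integral over the winning value $v$ of the density of $Z_{r^*}$ times $\prod_{r\neq r^*}\Pr[Z_r\leq \alpha'(v-u(\cdot,r))/S(\cdot)]$, the score changes $|u(x,r)-u(y,r)|\leq S(x)$ for $r\neq r^*$ are absorbed for free by monotonicity of the CDF: replacing $u(y,r)$ by the worst case $u(x,r)+S(x)$ only decreases each factor, so it still lower-bounds $\Pr[M(y)=r^*]$ at zero privacy cost. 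A change of variables $v\mapsto v+S(x)$ then restores those CDF arguments and transfers a total shift of at most $2S(x)$ (equivalently $2\alpha'$ after normalization) onto the single density coordinate $Z_{r^*}$; that one coordinate pays the sliding price twice, giving $e^{k\epsilon}$, while all $|\mathcal{R}|$ coordinates pay one dilation each for the rescaling $S(x)\to S(y)$, giving $e^{|\mathcal{R}|l\epsilon/2}$. Without this asymmetric treatment of the winner versus the losers, your set-comparison framework cannot reach the stated constant.
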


\begin{proof}
For simplicity, we let $\mathcal{R} = \{1,2,\dots,m\}$ and $|\mathcal{R}| = m$. It is sufficient to show that, for all neighboring datasets $x,y \in D^n$,
\begin{eqnarray}
\Pr[M(x) = 1] \leq \exp\left( \left( k + \frac{m}{2} \cdot l \right) \cdot \epsilon \right) \cdot \Pr[M(y) = 1].
\end{eqnarray}
Here, we let $S(x) = \max_{r \in \mathcal{R}} S(x,r)$. Then, 
\begin{eqnarray}
\Pr[M(x) = 1] \hspace{-0.2cm} &=& \hspace{-0.2cm} \int_{v \in (-\infty, \infty)} \Pr\left[ u(x,1) + \frac{S(x)}{\alpha'} \cdot Z_1 = v \right] \nonumber\\
&& \ \ \ \ \ \ \cdot \Pr\left[ u(x,2) + \frac{S(x)}{\alpha'} \cdot Z_2 \leq v \right] \nonumber\\
&& \ \ \ \cdots \Pr\left[ u(x,m) + \frac{S(x)}{\alpha'} \cdot Z_m \leq v \right] \nonumber\\
&=& \hspace{-0.2cm} \int_{v \in (-\infty, \infty)} \Pr\left[ Z_1 = \frac{\alpha'(v-u(x,1))}{S(x)} \right] \nonumber\\
&& \ \ \ \ \ \ \cdot \Pr\left[ Z_2 \leq \frac{\alpha'(v-u(x,2))}{S(x)} \right] \nonumber\\
&& \ \ \ \cdots \Pr\left[ Z_m \leq \frac{\alpha'(v-u(x,m))}{S(x)} \right] .
\end{eqnarray}
Similarly,
\begin{eqnarray}
\Pr[M(y) = 1] \hspace{-0.2cm} &=& \hspace{-0.2cm} \int_{v \in (-\infty, \infty)} \Pr\left[ Z_1 = \frac{\alpha'(v-u(y,1))}{S(y)} \right] \nonumber\\
&& \ \ \ \ \ \ \cdot \Pr\left[ Z_2 \leq \frac{\alpha'(v-u(y,2))}{S(y)} \right] \nonumber\\
&& \ \ \ \cdots \Pr\left[ Z_m \leq \frac{\alpha'(v-u(y,m))}{S(y)} \right] . 
\end{eqnarray}
Because $\forall r \in \mathcal{R}: |u(x,r) - u(y,r)| \leq LS_{u,\mathcal{R}}(x,r) \leq S(x,r) \leq S(x)$, 
\begin{eqnarray}
(3) &\geq& \int_{v \in (-\infty, \infty)} \Pr\left[ Z_1 = \frac{\alpha'(v-u(y,1))}{S(y)} \right] \nonumber\\
&& \ \ \ \ \ \ \cdot \Pr\left[ Z_2 \leq \frac{\alpha'(v-u(x,2)-S(x))}{S(y)} \right] \nonumber\\
&& \ \ \ \cdots \Pr\left[ Z_m \leq \frac{\alpha'(v-u(x,m)-S(x))}{S(y)} \right]  \nonumber\\
&=& \int_{v \in (-\infty, \infty)} \Pr\left[ Z_1 = \frac{\alpha'(v-u(y,1)+S(x))}{S(y)} \right] \nonumber\\
&& \ \ \ \ \ \ \ \ \ \ \ \ \cdot \Pr\left[ Z_2 \leq \frac{\alpha'(v-u(x,2))}{S(y)} \right] \nonumber\\
&& \ \ \ \ \ \ \ \ \ \cdots \Pr\left[ Z_m \leq \frac{\alpha'(v-u(x,m))}{S(y)} \right] . 
\end{eqnarray}
Here, 
\begin{eqnarray}
&& -S(x) \leq u(x,1) - u(y,1) \leq S(x) \nonumber\\
&\iff& 0 \leq u(x,1) - u(y,1) + S(x) \leq 2 S(x), \nonumber
\end{eqnarray}
and
\begin{eqnarray}
\frac{S(x)}{S(y)} = \frac{\max_r S(x,r)}{\max_r S(y,r)} \leq e^{\beta'}. \nonumber
\end{eqnarray}
Therefore, 
\begin{eqnarray}
&& \Pr\left[ Z_1 = \frac{\alpha'(v-u(x,1))}{S(x)} \right] \nonumber\\
&\leq& e^{k \epsilon} \cdot \Pr\left[ Z_1 = \frac{\alpha'(v-u(y,1)+S(x))}{S(x)} \right] \nonumber\\
&\leq& e^{k \epsilon} \cdot e^{\frac{l \epsilon}{2}} \cdot \Pr\left[ Z_1 = \frac{\alpha'(v-u(y,1)+S(x))}{S(y)} \right] \nonumber
\end{eqnarray}
and $\forall r \in \{2,3,\dots,m\}:$
\begin{eqnarray}
&& \Pr\left[ Z_r \leq \frac{\alpha'(v-u(x,r))}{S(x)} \right] \nonumber\\
&\leq& e^{\frac{l \epsilon}{2}} \cdot \Pr\left[ Z_r \leq \frac{\alpha'(v-u(x,r))}{S(y)} \right]. \nonumber
\end{eqnarray}
Consequently, from (2) and (4), the relation (1) holds.
\end{proof}

From Theorem 2, our mechanism satisfies $\epsilon$-differential privacy when $k + \frac{|\mathcal{R}|}{2} \cdot l = 1$.

\subsubsection{One-Sided Noise}

Similar to the previous case, the privacy guarantee of our mechanism for the case of one-sided noise distribution is provided by the following theorem:

\begin{theorem}
If $Z_r$ are derived from $g$, the smooth private selection mechanism satisfies $\left(\left( k + \frac{|\mathcal{R}|-1}{2} \cdot l \right) \cdot \epsilon\right)$-differential privacy.
\end{theorem}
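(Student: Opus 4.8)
The plan is to mirror the proof of Theorem~2 almost verbatim, with one difference: the one-sided support of $g$ eliminates exactly one application of the dilation property. Writing $\mathcal{R}=\{1,\dots,m\}$ and $S(x)=\max_{r\in\mathcal{R}}S(x,r)$ as in that proof, it suffices to show $\Pr[M(x)=1]\leq e^{(k+\frac{m-1}{2}l)\epsilon}\cdot\Pr[M(y)=1]$ for an arbitrary pair of neighbors $x,y$. I would first write both probabilities as integrals over the value $v$ of the winning noisy score, with a density factor $\Pr[Z_1=\frac{\alpha'(v-u(x,1))}{S(x)}]$ for candidate~$1$ and CDF factors $\Pr[Z_r\leq\frac{\alpha'(v-u(x,r))}{S(x)}]$ for $r=2,\dots,m$, exactly as in equations (2) and (3), except that $Z_r$ is now distributed according to the one-sided density $g$. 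Because $g$ vanishes on the negative reals, each CDF factor $\Pr[Z_r\leq\frac{\alpha'(v-u(x,r))}{S(x)}]$ is $0$ whenever $v<u(x,r)$; hence the integrand is supported on $v\geq\max_{r}u(x,r)$. This anchoring at the maximum score is the feature that will let us drop a dilation.

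The remaining manipulation is the same shift-then-rescale argument as in Theorem~2. Using $|u(x,r)-u(y,r)|\leq LS_{u,\mathcal{R}}(x,r)\leq S(x,r)\leq S(x)$, I would lower-bound the $r\geq2$ CDF factors of $\Pr[M(y)=1]$ by replacing $u(y,r)$ with $u(x,r)+S(x)$, then substitute $v\mapsto v+S(x)$ so that these factors align with those of $\Pr[M(x)=1]$ (now carrying the noise scale $S(y)$). Finally I would convert $\Pr[M(x)=1]$ into a multiple of this lower bound term by term: each of the $m-1$ CDF factors costs one dilation factor $e^{l\epsilon/2}$ (only the noise scale changes, $S(x)\to S(y)$, with $S(x)/S(y)\leq e^{\beta'}=e^{\beta(l\epsilon)}$), and the candidate-$1$ factor costs $e^{k\epsilon}$ from two applications of the sliding property (the residual numerator shift $\frac{\alpha'(u(x,1)-u(y,1)+S(x))}{S(x)}$ lies in $[0,2\alpha']$, $\alpha'=\alpha(k\epsilon)$), exactly as in the proof of Theorem~2.

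The crux of the argument --- and the only place it genuinely departs from Theorem~2 --- is to show that the candidate-$1$ factor does \emph{not} additionally incur the dilation factor $e^{l\epsilon/2}$ it paid in the two-sided case, so that the total is $e^{k\epsilon}\cdot(e^{l\epsilon/2})^{m-1}=e^{(k+\frac{m-1}{2}l)\epsilon}$ rather than $e^{(k+\frac{m}{2}l)\epsilon}$. I expect this to come from the one-sided support of $g$ and the fact that the winning value $v$ ranges only over $[\max_{r}u(x,r),\infty)$: one can reorganize the decomposition so that the factor playing the density role keeps the scale $\frac{S(x)}{\alpha'}$ throughout (for instance, using the maximum score as the anchor of the change of variables instead of candidate~$1$'s score), pushing all $m-1$ of the $S(x)\to S(y)$ conversions onto the CDF factors. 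Making this precise is the main obstacle; I anticipate needing the monotonicity hypothesis on $g$ (that $h(z)>h(z')>0$ for $0\leq z<z'$), not merely the admissibility inequalities, to justify the step, possibly via a case split on whether $v-u(x,1)$ is small or large relative to $S(x)$. Once that step is in place, the rest is the same routine sliding/dilation bookkeeping as in Theorem~2, and $\epsilon$-differential privacy follows when $k+\frac{|\mathcal{R}|-1}{2}l=1$.
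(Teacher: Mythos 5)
Your setup is correct and your bookkeeping is the right one: relative to Theorem~2 exactly one dilation factor $e^{l\epsilon/2}$ must be eliminated, and everything else (the integral over the winning value $v$, the lower bound on $\Pr[M(y)=1]$ via $|u(x,r)-u(y,r)|\leq S(x)$, the shift $v\mapsto v+S(x)$, the two slidings costing $e^{k\epsilon}$) carries over verbatim. But the proposal stops exactly at the step that \emph{is} the theorem, and the place you propose to save the dilation --- the candidate-$1$ density factor --- is the wrong place. Written out, avoiding that dilation means showing $\frac{1}{S(x)}g(t)\leq e^{k\epsilon}\frac{1}{S(y)}g\bigl(\tfrac{S(x)}{S(y)}(t+s)\bigr)$ for $t\geq 0$ and $s\in[0,2\alpha']$ using only the sliding property and monotonicity; after spending $e^{k\epsilon}$ on the shift $s$, what remains is $g(u)\leq c\,g(cu)$ with $c=S(x)/S(y)$. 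For $g(u)\propto 1/(1+u^{\gamma})$ this fails for large $u$ when $c>1$ (the ratio $c\,g(cu)/g(u)$ tends to $c^{1-\gamma}<1$) and fails at $u=0$ when $c<1$. So the density factor genuinely needs its dilation in at least one of the two cases $S(x)\gtrless S(y)$, and no reanchoring of the change of variables fixes this: anchoring at candidate~$1$'s score does make the density factor free, but it then transfers an independent shift of up to $2\alpha'$ onto \emph{each} of the $m-1$ CDF factors, costing $(m-1)k\epsilon$ in slidings rather than $k\epsilon$.

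The saving from one-sidedness lives in the CDF factors, not the density factor, and it requires a case split you do not anticipate. When $S(x)>S(y)$, every CDF factor is free: for $v<u(x,r)$ the factor $G\bigl(\tfrac{\alpha'(v-u(x,r))}{S(x)}\bigr)$ is already $0$, and for $v\geq u(x,r)$ dividing the nonnegative argument by the smaller $S(y)$ only increases it, so monotonicity of $G$ gives the inequality with no dilation at all --- in this case one even gets $(k+\tfrac{1}{2}l)\epsilon$. When $S(x)\leq S(y)$ this argument reverses sign and each CDF factor again needs its dilation, so the density factor must be the one that goes free; there the cheap bound $g\bigl(\tfrac{S(x)}{S(y)}u\bigr)\geq g(u)$ only buys you the prefactor loss $e^{\beta'}$, which is not $\leq 1$, so one has to switch to a different decomposition (e.g., condition on the raw noises $Z_2,\dots,Z_m$ so that the $m-1$ dataset-independent densities $g(z_r)$ each absorb one dilation through the substitution $z_r\mapsto\tfrac{S(x)}{S(y)}w_r$, and the single survival factor $\bar{G}$ absorbs the two slidings, using $\bar{G}(t)=1$ for $t\leq 0$ and $\tfrac{S(x)}{S(y)}\leq 1$ to avoid an $m$-th dilation). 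Without this case analysis and the accompanying change of decomposition, the claimed exponent $\bigl(k+\tfrac{m-1}{2}l\bigr)\epsilon$ is not reached, so the proposal as written has a genuine gap at its crux.
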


The proof is similar to that of Theorem 2 and is provided on our GitHub page. From Theorem 3, our mechanism satisfies $\epsilon$-differential privacy when $k + \frac{|\mathcal{R}|-1}{2} \cdot l = 1$. Compared to the previous case, the value of $\frac{S(x)}{k \cdot \alpha}$ can be made smaller, thereby reducing the amount of noise.

From the above discussion, when using the smooth private selection to satisfy $\epsilon$-differential privacy, the value of $l$ should be set reasonably small; that is, we must obtain a $\beta$-{\it smooth upper bound} for small $\beta$. However, no general method for doing this has yet been proposed, and it was almost impossible to compute in a practical time period mainly because of the large $|D^n|$. In the next subsection, therefore, we present new theorems for efficient computation.

\subsection{New Theorems for Obtaining Smooth Upper Bound}

We first present a new theorem for obtaining $\beta$-{\it smooth sensitivity} for small $\beta$ below a certain value determined by {\it global sensitivity} and {\it local sensitivity}, based on the original definitions (Definitions 2, 3, and 5).

\begin{theorem}
For any $\beta\ (> 0)$ satisfying 
\begin{eqnarray}
\beta \leq \min_{x: LS_f(x) \neq GS_f} \frac{1}{gd(x)} \cdot \ln\left( \frac{GS_f}{LS_f(x)} \right), 
\end{eqnarray}
where $gd(x) := \min_{y: LS_f(y) = GS_f} d(y,x)$ represents the shortest distance from $x$ to elements the local sensitivity of which equals global sensitivity, the following equality holds:
\begin{eqnarray}
\forall x: S_{f,\beta}^*(x) = GS_f \cdot e^{-\beta \cdot gd(x)}. \nonumber
\end{eqnarray}
\end{theorem}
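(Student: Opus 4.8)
The plan is to prove the identity by establishing the inequalities ``$\ge$'' and ``$\le$'' separately, working directly from the definition
\[
S_{f,\beta}^*(x) = \max_{y \in D^n}\bigl( LS_f(y)\cdot e^{-\beta\cdot d(y,x)} \bigr)
\]
in Definition~5. I would first dispose of the degenerate cases: if $GS_f = 0$ every local sensitivity vanishes and both sides equal $0$; and if no point $x'$ has $LS_f(x') \ne GS_f$ then the minimum in~(5) is over the empty set (so every $\beta > 0$ is admissible) and the argument below needs only its first case. So assume $GS_f > 0$ and fix $x \in D^n$.

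\emph{The ``$\ge$'' direction.} Since $\max_{z \in D^n} LS_f(z) = GS_f$ (immediate from Definitions~2 and~3), a maximizer exists; let $y^\dagger$ be one closest to $x$, so $LS_f(y^\dagger) = GS_f$ and $d(y^\dagger, x) = gd(x)$. Substituting $y = y^\dagger$ into the maximum above gives $S_{f,\beta}^*(x) \ge GS_f\cdot e^{-\beta\cdot gd(x)}$.

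\emph{The ``$\le$'' direction.} It suffices to show $LS_f(y)\cdot e^{-\beta\cdot d(y,x)} \le GS_f\cdot e^{-\beta\cdot gd(x)}$ for every $y \in D^n$; I would split on whether $y$ attains global sensitivity. If $LS_f(y) = GS_f$, then $d(y,x) \ge gd(x)$ by the definition of $gd(x)$, so the bound follows from $\beta > 0$. If $LS_f(y) < GS_f$ (the subcase $LS_f(y) = 0$ being trivial, assume $0 < LS_f(y) < GS_f$), then $y$ belongs to the index set of the minimum in~(5) and $gd(y) \ge 1$, so evaluating the hypothesis on $\beta$ at $y$ gives $\beta\cdot gd(y) \le \ln\!\bigl(GS_f/LS_f(y)\bigr)$, i.e.\ $LS_f(y) \le GS_f\cdot e^{-\beta\cdot gd(y)}$. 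Combined with the distance inequality $gd(x) \le gd(y) + d(y,x)$ and $\beta > 0$, this yields $LS_f(y)\cdot e^{-\beta\cdot d(y,x)} \le GS_f\cdot e^{-\beta\,(gd(y) + d(y,x))} \le GS_f\cdot e^{-\beta\cdot gd(x)}$, which finishes the case and hence the theorem.

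\emph{The distance inequality and the main obstacle.} The one step requiring thought is $gd(x) \le gd(y) + d(y,x)$: pick a point $z$ with $LS_f(z) = GS_f$ realizing $d(z,y) = gd(y)$, observe that $z$ lies in the minimization set defining $gd(x)$, and apply the triangle inequality for the Hamming metric, $gd(x) \le d(z,x) \le d(z,y) + d(y,x) = gd(y) + d(y,x)$. I expect the crux of the write-up to be the recognition that one must route through a nearest maximal-sensitivity witness \emph{of $y$} (not of $x$), and that the hypothesis on $\beta$ is precisely calibrated so that the ``detour length'' $gd(y)$ is converted, via $e^{-\beta\,gd(y)}$, into exactly enough multiplicative slack to bridge the gap between $LS_f(y)$ and $GS_f$. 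The remaining manipulations are routine arithmetic with exponentials.
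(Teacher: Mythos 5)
Your proposal is correct and follows essentially the same route as the paper's proof: the lower bound by substituting a nearest witness of global sensitivity, and the upper bound for non-maximal $y$ by routing through the nearest witness $z$ \emph{of $y$}, applying the hypothesis on $\beta$ at $y$, and using the triangle inequality $d(z,x) \le d(z,y) + d(y,x)$. Your version is slightly cleaner in that it subsumes the paper's separate case $LS_f(x) = GS_f$ (where $gd(x)=0$) into the general argument and explicitly notes the degenerate cases, but the substance is identical.
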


\begin{proof}
From (5), for all $x: LS_f(x) \neq GS_f$,
\begin{eqnarray}
&& \beta \leq \frac{1}{gd(x)} \cdot \ln\left( \frac{GS_f}{LS_f(x)} \right) \nonumber\\
&\iff& LS_f(x) \leq GS_f \cdot e^{-\beta \cdot gd(x)}.
\end{eqnarray}
Let $Y$ be the set of $y$ satisfying $LS_f(y) = GS_f$ and $d(y,x) = gd(x)$. Because
\begin{eqnarray}
S_{f,\beta}^*(x) &=& \max_y \left( LS_f(y) \cdot e^{-\beta \cdot d(y,x)} \right) \nonumber\\
&=& \max\Bigl\{ GS_f \cdot e^{-\beta \cdot gd(x)}, \nonumber\\
&& \ \ \ \ \ \ \ \ \max_{y \notin Y} \left( LS_f(y) \cdot e^{-\beta \cdot d(y,x)} \right) \Bigr\}, \nonumber
\end{eqnarray}
it is sufficient to show that
\begin{eqnarray}
\forall x: \, GS_f \cdot e^{-\beta \cdot gd(x)} \geq \max_{y} \left( LS_f(y) \cdot e^{-\beta \cdot d(y,x)} \right). 
\end{eqnarray}

\hspace{-0.5cm} (I) When $LS(x) = GS_f$:

For any $y$, because $d(y,x) \geq gd(x) = 0$,
\begin{eqnarray}
GS_f \cdot e^{-\beta \cdot gd(x)} = GS_f \geq LS_f(y) \geq LS_f(y) \cdot e^{-\beta \cdot d(y,x)}. \nonumber
\end{eqnarray}

\hspace{-0.5cm} (II) When $LS_f(x) \neq GS_f$:

(a) For any $y: LS_f(y) = GS_f$, because $d(y,x) \geq gd(x)$,
\begin{eqnarray}
GS_f \cdot e^{-\beta \cdot gd(x)} \geq GS_f \cdot e^{-\beta \cdot d(y,x)} = LS_f(y) \cdot e^{-\beta \cdot d(y,x)}. \nonumber
\end{eqnarray}

(b) For any $y: LS_f(y) \neq GS_f$, where $z$ satisfies $LS_f(z) = GS_f$ and $d(z,y) = gd(y)$,
\begin{eqnarray}
GS_f \cdot e^{-\beta \cdot gd(x)} &\geq& GS_f \cdot e^{-\beta \cdot d(z,x)} \ \ \ \ [\because gd(x) \leq d(z,x)] \nonumber\\
&\geq& LS_f(y) \cdot e^{\beta \cdot d(z,y)} \cdot e^{-\beta \cdot d(z,x)} \ \ \ \ [\because (6)] \nonumber\\
&\geq& LS_f(y) \cdot e^{\beta \cdot d(z,y)} \cdot e^{-\beta (d(z,y) + d(y,x))} \nonumber\\
&& \ \ \ \ \ \ \ \ \ \ [\because d(z,x) \leq d(z,y) + d(y,x)] \nonumber\\
&=& LS_f(y) \cdot e^{-\beta \cdot d(y,x)}. \nonumber
\end{eqnarray}

From (I) and (II), the relation (7) holds. 
\end{proof}

The number of $x$ satisfying $LS_f(x) = GS_f$ is often $\mathcal{O}(1)$, and Theorem 4 eliminates the need to calculate {\it smooth sensitivity} in $\mathcal{O}(|D^n|)$ time based on the original Definition 5. We can also easily employ this theorem for private selection based on Definitions 7, 9, and 11.

One problem is that $\beta$ satisfying the inequality (1) may be too small; therefore, we present another theorem for obtaining a {\it smooth upper bound} for more practical $\beta$. In Theorem 5, for all $x$ with a {\it local sensitivity} close to $GS_f$, that is, for data that, in practical cases, are expected to be extremely rare, we set $S(x) = GS_f$. This expands the range of $\beta$ satisfying the inequality, and we can efficiently obtain smaller $S(x)$ for realistic data. The proof is similar to that of Theorem 4 and is provided on our GitHub page.

\begin{theorem}
Given a threshold $T$, we let the set of $x$ satisfying $LS_f(x) > T$ be $U$. For any $\beta\ (>0)$ satisfying 
\begin{eqnarray}
\beta \leq \min_{x \notin U} \frac{1}{ud(x)} \cdot \ln\left( \frac{GS_f}{LS_f(x)} \right), 
\end{eqnarray}
where $ud(x) := \min_{y \in U} d(y,x)$ represents the shortest distance between $x$ and $U$, the following function $S$ is a $\beta$-smooth upper bound:
\begin{eqnarray}
S(x) = GS_f \cdot e^{-\beta \cdot ud(x)} \ . \nonumber
\end{eqnarray}
\end{theorem}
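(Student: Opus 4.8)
The plan is to verify directly that the function $S(x) = GS_f \cdot e^{-\beta \cdot ud(x)}$ satisfies the two defining requirements of a $\beta$-smooth upper bound on the local sensitivity of $f$ (Definition 4): (i) $S(x) \geq LS_f(x)$ for every $x \in D^n$, and (ii) $S(x) \leq e^{\beta} \cdot S(y)$ whenever $d(x,y) = 1$. The structure will mirror the proof of Theorem 4, with $ud(\cdot)$ playing the role of $gd(\cdot)$ and the threshold set $U$ playing the role of the set of datasets whose local sensitivity equals $GS_f$. Throughout I assume $T < GS_f$, so that every dataset achieving $GS_f$ lies in $U$, hence $U \neq \emptyset$ and $ud(x)$ is finite for all $x$.

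For requirement (i) I would split on whether $x \in U$. If $x \in U$, then taking $y = x$ in the definition of $ud$ shows $ud(x) = 0$, so $S(x) = GS_f$, and the bound $S(x) \geq LS_f(x)$ holds because $LS_f(x) \leq GS_f$ always. If $x \notin U$, then $ud(x) \geq 1$, and the hypothesis (8) gives $\beta \leq \frac{1}{ud(x)} \ln\left( \frac{GS_f}{LS_f(x)} \right)$; multiplying by $ud(x)$ and exponentiating --- exactly the rearrangement used in step (6) of the proof of Theorem 4 --- yields $LS_f(x) \leq GS_f \cdot e^{-\beta \cdot ud(x)} = S(x)$.

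For requirement (ii), fix neighbors $x, y$ with $d(x,y) = 1$. Since $S(x)/S(y) = e^{\beta(ud(y) - ud(x))}$ and $\beta > 0$, it suffices to show $ud(y) \leq ud(x) + 1$; the inequality in the other direction, and hence also $S(y) \leq e^{\beta} S(x)$, follows by the symmetry of the Hamming distance. To see this, let $z \in U$ attain $d(z,x) = ud(x)$; the triangle inequality gives $d(z,y) \leq d(z,x) + d(x,y) = ud(x) + 1$, and since $z \in U$ we conclude $ud(y) \leq d(z,y) \leq ud(x) + 1$.

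I do not expect a substantive obstacle: the argument is essentially routine once the case split is in place. The one point I would state carefully is that the hypothesis (8) ranges over $x \notin U$ only --- the bound being vacuous when $ud(x) = 0$ --- and that this restriction causes no gap, because requirement (i) on all of $U$ is supplied for free by the trivial bound $LS_f \leq GS_f$ together with $ud|_U \equiv 0$. Verifying this interplay, and the finiteness of $ud$ under the convention $T < GS_f$, is the only care the proof demands.
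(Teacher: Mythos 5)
Your proof is correct and matches the approach the paper indicates (a direct adaptation of the Theorem~4 argument; the paper defers the full proof to its repository): the case split on $x \in U$ together with the rearrangement of hypothesis (8) gives $S(x) \geq LS_f(x)$, and the triangle inequality applied to a nearest point of $U$ gives $|ud(x) - ud(y)| \leq 1$ for neighbors and hence the $\beta$-smoothness condition. Your added care about assuming $T < GS_f$ (so that $U \neq \emptyset$ and $ud$ is finite) and your observation that the smoothness requirement must be verified explicitly here --- unlike in Theorem~4, where $S^*_{f,\beta}$ is smooth by construction --- are both correct and worth keeping.
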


By setting $T$ so that $|U| << |D^n|$, the {\it smooth upper bound} can be efficiently obtained for practical $\beta$ using Theorem 5. As with Theorem 4, Theorem 5 can also be employed for private selection based on Definitions 7, 9, and 10.

Developing more generalized theorems for obtaining {\it smooth sensitivity} and {\it smooth upper bound} is a challenging issue for the future work.

\subsection{Detailed Analysis of $(\alpha, \beta)$-Admissible Property}

In using our mechanism and other methods based on the concept of {\it smooth sensitivity}, it is critical to provide a rigorous $(\alpha, \beta)$-{\it admissible} property of the noise distribution for setting an appropriate $\beta$ for our Theorems 4 and 5 and for reducing the added noise. The discussion in the existing study \cite{3} are still insufficiently elaborate; therefore, we provide a new theorem for the distribution with density $h(z) \propto \frac{1}{1 + |z|^\gamma} (\gamma > 1)$, which can be used to satisfy $\epsilon$-differential privacy.

\begin{theorem}
For any $\gamma > 1$, the distribution with density $h(z) \propto \frac{1}{1 + |z|^\gamma}$ is 
\begin{eqnarray}
\ \ \ \ \ \left( \frac{\epsilon}{2 \cdot (\gamma - 1)^{\frac{\gamma - 1}{\gamma}}}, \, \frac{\epsilon}{2 (\gamma - 1)} \right)\mathchar`-{{\it admissible}}. \nonumber
\end{eqnarray}
\end{theorem}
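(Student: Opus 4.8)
The plan is to reduce both admissibility conditions to \emph{pointwise} bounds on the density ratio. Since $\Pr_{Z\sim h}[Z\in\mathcal{S}+\Delta]=\int_{\mathcal{S}}h(z+\Delta)\,\mathrm{d}z$ and $\Pr_{Z\sim h}[Z\in e^{\lambda}\mathcal{S}]=e^{\lambda}\int_{\mathcal{S}}h(e^{\lambda}z)\,\mathrm{d}z$, it is enough to show, for every $z\in\mathbb{R}$ and every admissible $\Delta$ and $\lambda$, the \emph{sliding} inequality $h(z)\le e^{\epsilon/2}\,h(z+\Delta)$ when $|\Delta|\le\alpha$ and the \emph{dilation} inequality $h(z)\le e^{\epsilon/2}e^{\lambda}h(e^{\lambda}z)$ when $|\lambda|\le\beta$; integrating either inequality over an arbitrary measurable $\mathcal{S}$ then yields the two inequalities of Definition~6. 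The normalizing constant $\big(\int_{\mathbb{R}}(1+|z|^{\gamma})^{-1}\,\mathrm{d}z\big)^{-1}$ is finite exactly because $\gamma>1$, and it cancels in every ratio, so it never has to be evaluated.

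For dilation I would substitute $h(z)\propto(1+|z|^{\gamma})^{-1}$ and rewrite the goal as $e^{-\lambda}\cdot\frac{1+e^{\gamma\lambda}|z|^{\gamma}}{1+|z|^{\gamma}}\le e^{\epsilon/2}$. Regarding the left-hand side as a function of $s=|z|^{\gamma}\ge 0$, its $s$-derivative has the constant sign of $e^{\gamma\lambda}-1$, so the supremum over $z$ occurs at $s=0$ or in the limit $s\to\infty$; evaluating these gives the value $e^{(\gamma-1)\lambda}$ for $\lambda\ge 0$ and $e^{-\lambda}$ for $\lambda\le 0$, and the choice $\beta=\frac{\epsilon}{2(\gamma-1)}$ controls both. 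For sliding I would use the symmetry $h(-z)=h(z)$ and the inequality $|z+\Delta|\le|z|+\alpha$ to reduce to maximizing $\psi(t)=\frac{1+(t+\alpha)^{\gamma}}{1+t^{\gamma}}$ over $t\ge 0$, i.e.\ to proving $\sup_{t\ge 0}\big[(t+\alpha)^{\gamma}-c\,t^{\gamma}\big]\le c-1$ with $c:=e^{\epsilon/2}$. Differentiation shows the maximizer is the unique interior point $t^{*}=\alpha/(c^{1/(\gamma-1)}-1)$ — it is interior precisely because $\gamma>1$ forces $\psi'(0)>0$ — and substituting $t^{*}$ back produces the clean sufficient condition $\alpha^{\gamma}\le (c-1)\,(c^{1/(\gamma-1)}-1)^{\gamma-1}/c$.

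The technical heart, and the step I expect to cause the most trouble, is then checking that the stated value $\alpha=\frac{\epsilon}{2(\gamma-1)^{(\gamma-1)/\gamma}}$, for which $\alpha^{\gamma}=\frac{(\epsilon/2)^{\gamma}}{(\gamma-1)^{\gamma-1}}$, meets this condition for all $\gamma>1$ and all $\epsilon>0$. Both sides of the condition agree to first order in $\epsilon$, so the naive estimate $e^{x}-1\ge x$ is too lossy; one has to retain higher-order terms of $e^{\epsilon/2}-1$ and $e^{\epsilon/(2(\gamma-1))}-1$ and show that the positive corrections outweigh the $1/c$ factor uniformly in $\gamma$, for instance by setting $x=\epsilon/2$ and $p=\gamma-1$ and reducing to a one-variable inequality handled by convexity/Bernoulli-type bounds. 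Extracting exactly the exponent $(\gamma-1)/\gamma$ in $\alpha$, instead of a weaker constant, is precisely where the argument sharpens the analysis of \cite{3}; by comparison the dilation inequality and the reduction to pointwise ratios are routine.
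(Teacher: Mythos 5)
Your reduction to pointwise density-ratio bounds and your dilation argument coincide with the paper's: the paper likewise writes $\ln\bigl(h(z)/(e^{\lambda}h(e^{\lambda}z))\bigr)=\ln\bigl(e^{-\lambda}\,\tfrac{1+e^{\lambda\gamma}|z|^{\gamma}}{1+|z|^{\gamma}}\bigr)$ and bounds it by $\lambda(\gamma-1)$ for $\lambda\ge 0$ and by $|\lambda|$ for $\lambda<0$. (Both you and the paper gloss over the $\lambda<0$ branch when $1<\gamma<2$: there $\beta=\tfrac{\epsilon}{2(\gamma-1)}>\tfrac{\epsilon}{2}$ and the value $e^{-\lambda}$ is actually attained at $z=0$, so the stated $\beta$ is only valid for $\gamma\ge 2$.)

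For the sliding property you take a genuinely different, and substantially harder, road. The paper sets $\phi(t)=\ln(1+t^{\gamma})$, writes $\ln\bigl(h(z)/h(z+\Delta)\bigr)=\phi(|z+\Delta|)-\phi(|z|)$, and applies the mean value theorem together with $\sup_{\zeta>0}\phi'(\zeta)=\sup_{\zeta>0}\tfrac{\gamma}{\zeta+\zeta^{1-\gamma}}=(\gamma-1)^{(\gamma-1)/\gamma}$, which yields $\ln\bigl(h(z)/h(z+\Delta)\bigr)\le|\Delta|\cdot(\gamma-1)^{(\gamma-1)/\gamma}\le\epsilon/2$ in two lines; that Lipschitz constant is exactly where the exponent $(\gamma-1)/\gamma$ in $\alpha$ comes from, and no transcendental inequality remains. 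You instead optimize the ratio exactly and arrive at the condition $\alpha^{\gamma}\le(c-1)\,(c^{1/(\gamma-1)}-1)^{\gamma-1}/c$ with $c=e^{\epsilon/2}$, which is a correct reduction and is essentially the sharp threshold for $\alpha$ --- but you leave its verification for the stated $\alpha$ as a sketch, and that is the one real gap in the proposal. The inequality is true and can be closed: with $x=\epsilon/2$, $p=\gamma-1$ and $g(y)=(e^{y}-1)/y$ it is equivalent to $g(-x)\,g(x/p)^{p}\ge 1$, and the identity $\ln g(y)=\tfrac{y}{2}+\ln\tfrac{\sinh(y/2)}{y/2}$ makes the first-order terms cancel exactly and reduces everything to $\sinh u\ge u$. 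So your route works (and even shows the paper's $\alpha$ is not tight), but as written the decisive step is asserted rather than proved, whereas the paper's mean-value-theorem argument sidesteps it entirely.
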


\begin{proof}
We set
\begin{eqnarray}
\alpha = \frac{\epsilon}{2 \cdot (\gamma - 1)^{\frac{\gamma - 1}{\gamma}}}, \ \ \mathrm{and}\ \ 
\beta = \frac{\epsilon}{2 (\gamma - 1)}. \nonumber
\end{eqnarray}

First, we analyze the sliding property. From Definition 6, it is sufficient to show that $\ln \left( \frac{h(z)}{h(z+\Delta)} \right)$ is at most $\frac{\epsilon}{2}$ when $|\Delta| \leq \alpha$. Where $\phi(x) = \ln(1+x^\gamma)$, the following equalities hold:
\begin{eqnarray}
\ln\left( \frac{h(z)}{h(z+\Delta)} \right) &=& \ln\left( \frac{1 + |z + \Delta|^\gamma}{1 + |z|^\gamma} \right) \nonumber\\
&=& \phi(|z+\Delta|) - \phi(|z|). \nonumber
\end{eqnarray}
Here, there exists $\zeta > 0$ such that $\phi(|z+\Delta|) - \phi(|z|) \leq |\Delta| \cdot |\phi'(\zeta)|$. Because
\begin{eqnarray}
\phi'(\zeta) = \frac{\gamma \cdot \zeta^{\gamma-1}}{1 + \zeta^\gamma} = \frac{\gamma}{\zeta + \zeta^{1-\gamma}} \leq (\gamma - 1)^{\frac{\gamma-1}{\gamma}} \ \ [\because \zeta > 0], \nonumber
\end{eqnarray}
$\ln \left( \frac{h(z)}{h(z+\Delta)} \right) \leq |\Delta| \cdot (\gamma - 1)^{\frac{\gamma-1}{\gamma}}$. Thus, when $|\Delta| \leq \alpha$, $\ln \left( \frac{h(z)}{h(z+\Delta)} \right) \leq \frac{\epsilon}{2}$.

Thereafter, we analyze the dilation property. Similar to the above, it is sufficient to show that $\ln\left( \frac{h(z)}{e^{\lambda} \cdot h(e^\lambda z)} \right)$ is at most $\frac{\epsilon}{2}$ when $|\lambda| \leq \beta$. Because
\begin{eqnarray}
\ln\left( \frac{h(z)}{e^\lambda \cdot h(e^\lambda z)} \right) = \ln\left( \frac{1}{e^\lambda} \cdot \frac{1 + e^{\lambda \gamma} |z|^\gamma}{1 + |z|^\gamma} \right), \nonumber
\end{eqnarray}
when $\lambda \geq 0$,
\begin{eqnarray}
\ln\left( \frac{h(z)}{e^\lambda \cdot h(e^\lambda z)} \right) \leq \ln\left( \frac{1}{e^\lambda} \cdot \frac{e^{\lambda \gamma} + e^{\lambda \gamma} |z|^\gamma}{1 + |z|^\gamma} \right) = \lambda (\gamma - 1), \nonumber
\end{eqnarray}
and when $\lambda < 0$,
\begin{eqnarray}
\ln\left( \frac{h(z)}{e^\lambda \cdot h(e^\lambda z)} \right) \leq \ln\left( \frac{1}{e^\lambda} \right) = -\lambda = |\lambda|. \nonumber
\end{eqnarray}
Thus, when $|\lambda| \leq \beta$, $\ln\left( \frac{h(z)}{e^\lambda \cdot h(e^\lambda z)} \right) \leq \frac{\epsilon}{2}$. 
\end{proof}

The $(\alpha,\beta)$ in this theorem is larger than previously known $\left( (\alpha, \beta) = \left( \frac{\epsilon}{2(\gamma + 1)}, \frac{\epsilon}{2(\gamma + 1)} \right) \right)$ \cite{3}; therefore, we can reduce the added noise in {\it smooth sensitivity}-based methods including our proposed mechanism. 

In the next section, by utilizing Theorem 6 along with Theorems 4 and 5, we experimentally show that our smooth private selection can outperform existing {\it global sensitivity}-based methods.

\section{Experimental Evaluation}

We compared our smooth private selection with two existing methods based on the {\it global sensitivity}, the exponential mechanism and the permute-and-flip, thereby confirming the high utility of our mechanism. Specifically, we measured the accuracy, that is, the probability that the element selected by each mechanism is indeed the element with the highest score. For our mechanism, we also examined the effect of variation in $\gamma$ on the accuracy. 

We experimented with a transmission disequilibrium test (TDT), which is common in genomic statistical analysis, and considered a situation where the most significant element (single nucleotide polymorphism, SNP) is extracted based on the TDT statistics. This setting has also been addressed by existing studies \cite{17,18,5} in the bioinformatics field and is important in terms of privacy protection for personalized medicine. In particular, in the context of numeric queries, {\it smooth sensitivity} achieves significant higher accuracy than {\it global sensitivity} for the TDT statistics \cite{5}.

A TDT is used to examine linkages and correlations between SNPs and diseases in family-based studies. The most basic TDT statistic can be calculated from the following table, which represents a classification of $2N$ parents in $N$ families.

\begin{table}[htbt]
    \centering
    \vspace{-0.1cm}
    \begin{tabular}{cc|cc|c}
         &  & \multicolumn{2}{c|}{Non-Transmitted Allele} & \multirow{2}{*}{Total} \\ \cline{3-4}
         &  & \ \ \ \ $A_1$ & $A_2$ &  \\ \hline
        \multicolumn{1}{c|}{Transmitted} & $A_1$ & \ \ \ \ $a$ & $b$ & $a+b$ \\
        \multicolumn{1}{c|}{Allele} & $A_2$ & \ \ \ \ $c$ & $d$ & $c+d$ \\ \hline
        \multicolumn{2}{c|}{Total} & \ \ \ \ $a+c$ & $b+d$ & $2N$
    \end{tabular}
    \vspace{-0.1cm}
\end{table}

Using $b$ and $c$, the TDT statistic \cite{19} is calculated as $\chi^2_{TDT}(b,c) := \frac{(b-c)^2}{b+c}$. When $b = c = 0$, we set $\chi^2_{TDT}(0,0) = 0$. Under differentially private settings, we assume neighboring datasets differ by only one family. Then, the {\it global sensitivity} of TDT statistic is $8(N-1)/N$ \cite{17}.

In the experiments, we utilized Theorems 4 and 5 to compute the {\it smooth upper bound} $S(x)$ used in the smooth private selection. We set $T = 6$ for Theorem 5 and compute $gd(x)$ and $ud(x)$ based on the following Lemmas 2 and 3. 

\begin{lemma}
$(b,c)$ satisfies $LS_{\chi^2_{TDT}}((b,c)) > 6$ when
\vspace{-0.1cm}
\begin{eqnarray}
&& 0 \leq c < \frac{b-8}{7} \ \lor \ 2 \leq b < \frac{c+8}{7} \nonumber\\
&\lor& 0 \leq b < \frac{c-8}{7} \ \lor \ 2 \leq c < \frac{b+8}{7} \ . \nonumber
\end{eqnarray}
\end{lemma}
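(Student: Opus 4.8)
The plan is to reduce the lemma to a finite algebraic case analysis over the ways a single family can perturb the table. First I would make the set of feasible changes explicit: a family contributes two parents, each adding at most one heterozygous count, so replacing one family deletes a pair $(b_0,c_0)$ and inserts a pair $(b_1,c_1)$ with $b_i,c_i\ge 0$ and $b_i+c_i\le 2$; hence $(\Delta b,\Delta c)=(b_1-b_0,\,c_1-c_0)$ ranges over an explicit list of $\mathcal{O}(1)$ vectors, constrained only by $b+\Delta b\ge 0$ and $c+\Delta c\ge 0$. Then $LS_{\chi^2_{TDT}}((b,c))$ is the maximum of $|\chi^2_{TDT}(b+\Delta b,c+\Delta c)-\chi^2_{TDT}(b,c)|$ over this list, and it suffices to pin down, for each move, exactly the set of $(b,c)$ where this difference exceeds $6$.

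The main tool is the identity $\chi^2_{TDT}(b,c)=b-3c+\frac{4c^2}{b+c}$ (and, symmetrically, $c-3b+\frac{4b^2}{b+c}$) for $b+c>0$, the case $(0,0)$ being handled by the convention $\chi^2_{TDT}(0,0)=0$. This turns each $\Delta\chi^2$ into a rational function of $(b,c)$. The decisive observation is that the two ``double-swap'' moves $(\Delta b,\Delta c)=(2,-2)$ and $(-2,2)$ — which require $c\ge 2$, respectively $b\ge 2$ — leave $b+c$ fixed and satisfy $\Delta\chi^2=8-\frac{16(c-1)}{b+c}$ and $8-\frac{16(b-1)}{b+c}$ (these same moves realize the global sensitivity $8(N-1)/N$ at the extreme tables, which is why they are the critical ones). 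Imposing $|\Delta\chi^2|>6$ for these two moves yields, after elementary manipulation, precisely the four inequalities in the statement: e.g.\ $8-\frac{16(c-1)}{b+c}>6 \iff b>7c-8$, which with the availability condition $c\ge 2$ is one clause; the sign-reversed branch of the same move gives $b<\frac{c-8}{7}$; and the move $(-2,2)$ supplies the remaining two clauses by the $b\leftrightarrow c$ symmetry of $\chi^2_{TDT}$.

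It then remains to verify that no other feasible move enlarges this region. A crude bound — $\chi^2_{TDT}(b,c)\le|b-c|$ together with $|\Delta b-\Delta c|\le 4$, whence $|\Delta\chi^2|\le 2|b-c|+4$ — shows that only $|b-c|\ge 2$ can matter, and there I would split according to which of the two decompositions of $\chi^2_{TDT}$ carries the small correction term. The moves $(\pm2,0)$ and $(0,\pm2)$, whose linear part $\Delta b-3\Delta c$ (or its $b\leftrightarrow c$ mirror) has size exactly $6$, never actually reach $6$; the moves with strictly smaller linear part ($(\pm1,0)$, $(0,\pm1)$, $(\pm1,\pm1)$, $(\pm1,\mp1)$) are even further below $6$; and the only remaining moves, $(2,-1)$, $(-1,2)$, $(1,-2)$, $(-2,1)$, have linear part of size $7$ in exactly one regime, where the correction term delays their ``$>6$'' region until $b$ is far larger than $c$ (resp.\ $c$ than $b$) — strictly beyond the $(2,-2)$ and $(-2,2)$ thresholds — so it stays inside one of the four clauses. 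Using the $b\leftrightarrow c$ symmetry throughout halves the casework. Combining, $LS_{\chi^2_{TDT}}((b,c))>6$ holds iff some feasible move attains $|\Delta\chi^2|>6$, which by the above is exactly the stated union.

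I expect the main obstacle to be the bookkeeping in this last step: confirming move by move that every non-extremal change has its ``$>6$'' region contained in the four clauses, and in particular computing the exact threshold for the size-$7$ moves rather than relying on asymptotics. A secondary subtlety is pinning down which changes $(\Delta b,\Delta c)$ are realizable at a given $(b,c)$ — a source family of the required composition must exist — so that $LS_{\chi^2_{TDT}}$ is a well-defined function of $(b,c)$; taking the maximum over all boundary-feasible moves gives an upper bound on the true local sensitivity, which is exactly what the downstream Theorems~4 and~5 require, but this point should be stated carefully.
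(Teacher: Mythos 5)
Your proposal is correct and follows what is essentially the only available route (the paper itself defers this proof to its GitHub page, so no in-text comparison is possible): enumerate the $\mathcal{O}(1)$ single-family perturbations $(\Delta b,\Delta c)$ with $|\Delta b|,|\Delta c|\le 2$ and $b_i+c_i\le 2$, observe via $\chi^2_{TDT}(b,c)=b-3c+\frac{4c^2}{b+c}$ that the sum-preserving moves $(\pm 2,\mp 2)$ give $\Delta\chi^2=\frac{8(b-c+2)}{b+c}$ (resp.\ its mirror), and the conditions $|\Delta\chi^2|>6$ together with the availability constraints $c\ge 2$ (resp.\ $b\ge 2$) reproduce the four clauses exactly. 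Your key computations check out (e.g.\ for $(2,-1)$ at $c$ fixed the threshold $b\ge 15$ lies strictly inside the clause threshold $b\ge 7$, and the $(\pm2,0),(0,\pm2)$ moves satisfy $-6\le\Delta\chi^2\le 6$ with equality never attained), and your caveat that $LS_{\chi^2_{TDT}}((b,c))$ is really the maximum over boundary-feasible moves — an upper bound on the family-level local sensitivity, which is what Theorem~5 needs — is a point the paper glosses over.
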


\begin{lemma}
The Hamming distance for TDT datasets can be
\vspace{-0.1cm}
\begin{eqnarray}
&& d(T(b,c), \, T(b',c')) \nonumber\\
&=& \begin{cases}
    \left\lceil \frac{|(b+c)-(b'+c')|}{2} \right\rceil \ \ \ \ \ \ \, ((b-b') \cdot (c-c') \geq 0)\\
    \vspace{-0.4cm}
    \ \\
    \left\lceil \frac{\max\{|b-b'|, \, |c-c'|\}}{2} \right\rceil \ \ \ ((b-b') \cdot (c-c') < 0)
\end{cases}, \nonumber
\end{eqnarray}
where $T(b,c)$ represents a table that can be formed as the above table.
\end{lemma}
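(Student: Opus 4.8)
\emph{Plan overview.} The plan is to reduce the claim to a small counting problem about re-classifying the two parents of a single family, and then obtain the two cases of the formula from matching lower and upper bounds on the number of families that must be changed. First I would fix the model: a TDT dataset on $N$ families is a length-$N$ list of family entries, each entry an (unordered) pair of parents, and each parent falls into exactly one of the four cells $\{a,b,c,d\}$ according to its (transmitted, non-transmitted) allele pair; two datasets are at Hamming distance $1$ iff they differ in exactly one family entry, and changing a family entry re-classifies its two parents arbitrarily. Since $T(b,c)$ denotes \emph{any} table (equivalently, any dataset) realizing the counts $(b,c)$, the quantity $d(T(b,c),T(b',c'))$ is the minimum, over all datasets $X$ realizing $(b,c)$ and $X'$ realizing $(b',c')$, of $d(X,X')$, i.e. the least number of families one must change to pass from some realization of $(b,c)$ to some realization of $(b',c')$ — so we are free to choose both endpoints to minimize this.

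\emph{Per-family move.} The key lemma I would prove first: changing one family changes the pair $(b,c)$ by an integer vector $(\Delta b,\Delta c)$ with $|\Delta b|\le 2$, $|\Delta c|\le 2$, and $|\Delta b+\Delta c|\le 2$, and conversely every integer vector satisfying these three inequalities is realizable by an explicit move of the two parents. The forward direction is immediate: each of the two parents contributes at most $\pm 1$ to $b$ and to $c$, and $b+c$ restricted to a single family is just the number of that family's parents lying in the $b$- or $c$-cell, a quantity in $\{0,1,2\}$. The converse is a finite check; in particular the extreme opposite-sign move $(0,2)\mapsto(2,0)$ (both parents moved from the $c$-cell to the $b$-cell) realizes $(\Delta b,\Delta c)=(+2,-2)$, and partial versions of it realize the intermediate vectors.

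\emph{Bounds and construction.} For the lower bound: over $k$ family changes, each of $b$, $c$, and $b+c$ moves by at most $2k$, so $k\ge\lceil|b-b'|/2\rceil$, $k\ge\lceil|c-c'|/2\rceil$, and $k\ge\lceil|(b+c)-(b'+c')|/2\rceil$. When $(b-b')(c-c')\ge 0$ one has $|(b+c)-(b'+c')|=|b-b'|+|c-c'|$, so the third bound dominates the other two and gives the first case. When $(b-b')(c-c')<0$ the first two bounds combine, via $\max(\lceil x/2\rceil,\lceil y/2\rceil)=\lceil\max(x,y)/2\rceil$ for nonnegative integers, into $k\ge\lceil\max(|b-b'|,|c-c'|)/2\rceil$, the second case. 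For the matching construction: in the same-sign case (WLOG $b'\ge b$, $c'\ge c$) change $\lceil((b'-b)+(c'-c))/2\rceil$ families, each one moving up to two parents from the $a$- or $d$-cell into the $b$- or $c$-cell; in the opposite-sign case (WLOG $b'\ge b$, $c\ge c'$) use $\lceil\max(b'-b,\,c-c')/2\rceil$ families, each simultaneously pushing parents into the $b$-cell and out of the $c$-cell via the $(0,2)\mapsto(2,0)$-type move, which lets the per-coordinate increments be distributed freely across the chosen families. Throughout, the $a$- and $d$-cells serve as a free reservoir (parent genotypes may be reassigned at will), so the constructions never violate nonnegativity or the fixed total $a+b+c+d=2N$ as long as the target $(b',c')$ is itself realizable, which holds for the pairs arising in the smooth-sensitivity computation.

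\emph{Main obstacle.} I expect the delicate step to be the opposite-sign construction: one must verify both that a single family change realizes an arbitrary $(\Delta b,\Delta c)$ with opposite signs and each coordinate of magnitude at most $2$, and that exactly $\lceil\max(|b-b'|,|c-c'|)/2\rceil$ families can be orchestrated to reach an arbitrary such target while keeping every intermediate table valid ($b,c\ge 0$, total $2N$). Pinning down the per-family move also requires care because we are simultaneously choosing the two realizing datasets; once it is established, the same-sign case and all of the lower-bound inequalities are routine arithmetic (including the elementary identities $|(b+c)-(b'+c')|=|b-b'|+|c-c'|$ in the same-sign case and $\max\circ\lceil\cdot/2\rceil=\lceil\cdot/2\rceil\circ\max$).
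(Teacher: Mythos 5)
Your proposal is correct and follows what is essentially the only natural route to this formula: bound the per-family change of the three functionals $b$, $c$, and $b+c$ by $2$ to get the lower bounds, observe which bound is binding in each sign case, and match it with an explicit two-parents-per-family construction using the $a/d$ cells (resp.\ the $c$-to-$b$ reclassification) as the reservoir. You also correctly identify the one point that genuinely matters: the formula is only true if $d(T(b,c),T(b',c'))$ is read as the minimum Hamming distance over realizations of the two tables (for a fixed realization of one endpoint the distance can be strictly larger, e.g.\ two $c$-parents sitting in different families), and your feasibility checks ($b'+c'\le 2N$, enough parents in the source cells, at most $N$ families needed) are the right ones; note only that since Hamming distance counts differing families directly, no intermediate tables need to be valid, so that worry in your last paragraph can be dropped.
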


The proofs are provided on our GitHub page.

For setting the value of $l \cdot \beta$ used in our smooth private selection, we pre-computed the values on the right-hand side of (5) and (8). The Python codes for the experiments are also provided on our GitHub page.

\subsection{Comparison with Global Sensitivity-based Methods}

We first compared the smooth private selection with the exponential mechanism and the permute-and-flip. Here, the distribution with density $h(z) = \frac{\sqrt{2}}{\pi (1+z^4)}$, that is, $\gamma = 4$, was used for our method based on Theorem 6. The evaluation of the impact of the variation in $\gamma$ is reported in the next subsection.

The simulation data for the experiments were generated as follows, with reference to existing studies \cite{17,18}:
\begin{eqnarray}
&& s[r] \, (:= b[r] + c[r]) = \mathrm{Binomial}(2N, 2/3), \nonumber\\
&& b[r] = \mathrm{Binomial}(s[r],1/2), \ \ c[r] = s[r] - b[r], \nonumber
\end{eqnarray}
where $N = 150$ and $r \in \mathcal{R} = \{1,2,\dots,m\}$. Here, $m$ is the number of SNPs. 

\subsubsection{Accuracy}

We first evaluated the accuracy while varying the value of $m$ between $5$, $10$, $15$, and $20$. As for our method, we considered the following four cases: (I) Theorem 4 is used to compute $S(x)$, and the noise is two-sided, (II) Theorem 4 is used to compute $S(x)$, and the noise is one-sided, (III) Theorem 5 is used to compute $S(x)$, and the noise is two-sided, and (IV) Theorem 5 is used to compute $S(x)$, and the noise is one-sided. The value of $\epsilon$ was varied from $3$ to $21$; the accuracy was measured over five iterations of $40$ runs. The results are shown in Fig. \ref{fig1}.

\begin{figure}[htbt]
  (a) \ \ \ \ \ \ \ \ \ \ \ \ \ \ \ \ \ \ \ \ \ \ \ \ \ \ \ \ \ \ \ \ \ (b)\\
  \centerline{
  \includegraphics[width=4.2cm]{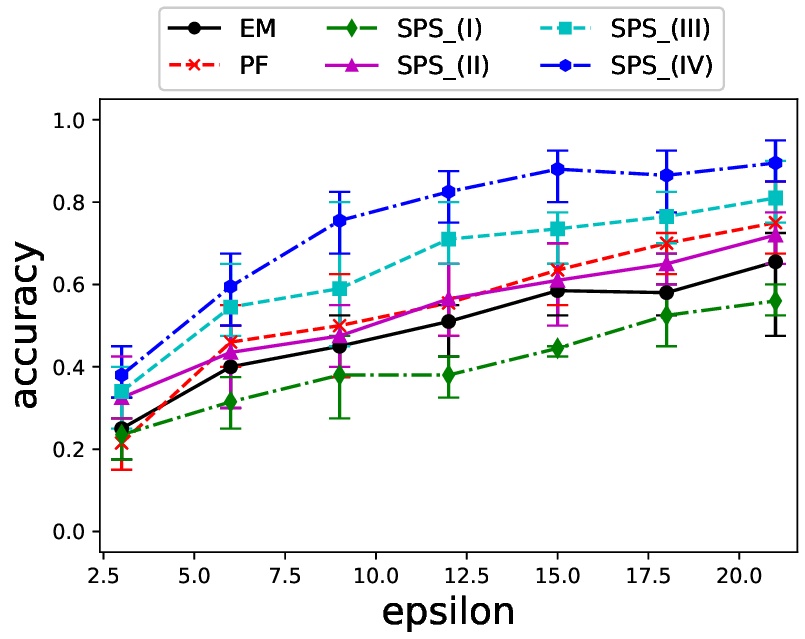}\ 
  \includegraphics[width=4.2cm]{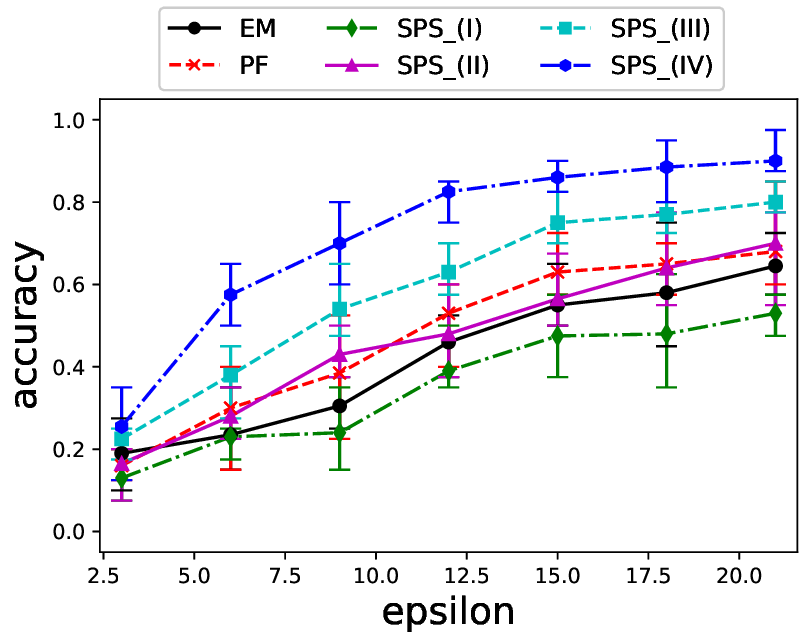}}
  
  (c) \ \ \ \ \ \ \ \ \ \ \ \ \ \ \ \ \ \ \ \ \ \ \ \ \ \ \ \ \ \ \ \ \ (d)\\
  \centerline{
  \includegraphics[width=4.2cm]{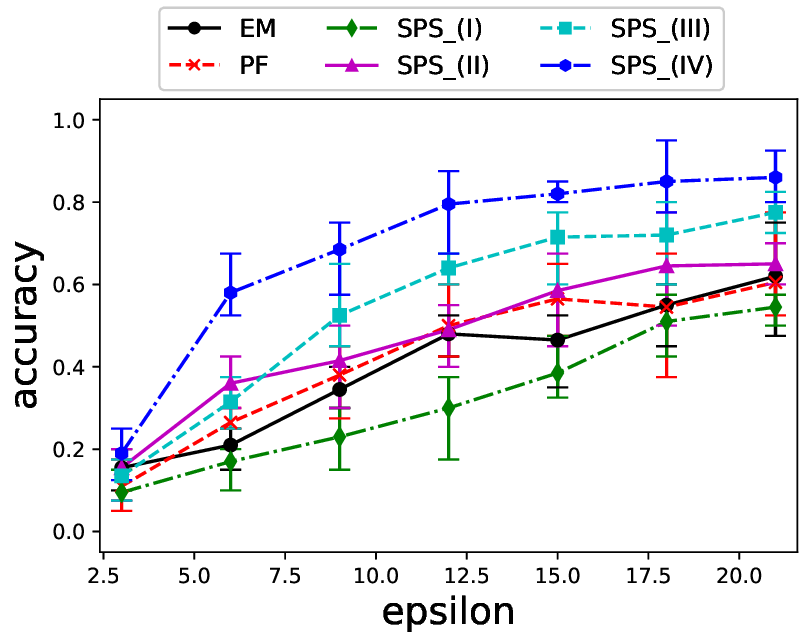}\ 
  \includegraphics[width=4.2cm]{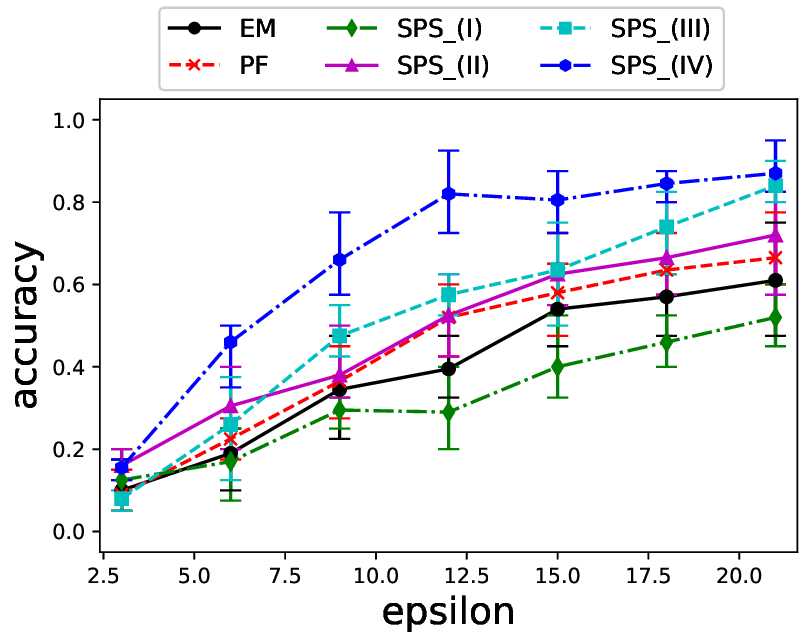}}
  \caption{Averaged accuracy when (a) $m = 5$, (b) $m = 10$, (c) $m = 15$, and (d) $m = 20$. The $x$-axis represents the achieved privacy level $\epsilon$. The $y$-axis represents the probability that the most significant SNP was correctly extracted. We compared the exponential mechanism (EM), the permute-and-flip (PF), and our smooth private selection (SPS). The error bar represents the range of all results in the five attempts.}
  \label{fig1}
\end{figure}

Fig. \ref{fig1} shows that our method outperformed the existing methods in Cases (III) and (IV) using Theorem 5, regardless of the value of $m$. When using Theorem 4 in Cases (I) and (II), the value of $l$ in our method must be set extremely small, so that the {\it smooth sensitivity} was not much different from the {\it global sensitivity}. Combined with the increase in noise due to the change in $\frac{1}{k\alpha}$, the accuracy was comparable to that of the existing {\it global sensitivity}-based methods. In contrast, using Theorem 5 can lower $S(x)$ because $l$ can be set relatively large; therefore, the added noise decreases. Furthermore, if the noise distribution is one-sided, the amount of perturbation required to satisfy $\epsilon$-differential privacy can be further reduced according to Theorems 2 and 3, which produced the highest accuracy in Case (IV). A comparison of the results for Cases (I) and (II) also implies that the use of one-sided noise rather than two-sided noise would be recommended for our method. 

\subsection{Effects of Variation in $\gamma$}

Next, we varied the value of $\gamma$ between $2$, $4$, $6$, and $10$ to investigate a suitable noise distribution for our method. The density functions $h(z)$ are $\frac{1}{\pi (1+z^2)}$, $\frac{\sqrt{2}}{\pi(1+z^4)}$, $\frac{3}{2\pi(1+z^6)}$, and $\frac{5\sqrt{5}-5}{4\pi(1+z^{10})}$, respectively. Here, we considered Case (IV), which had achieved the highest utility in the previous experiments.

\begin{table*}[htbt]
    \centering
    \caption{Run time (sec) of the mechanisms for data with $m$ elements.}
    \vspace{-0.2cm}
    \begin{tabular}{c||c|c|c|c|c|c}
        $m$ & $20$ & $50$ & $100$ & $200$ & $500$ & $1,000$ \\ \hline
        EM & $1.5 \times 10^{-4}$ & $3.5 \times 10^{-4}$ & $6.6 \times 10^{-4}$ & $1.5 \times 10^{-3}$ & $3.4 \times 10^{-3}$ & $6.4 \times 10^{-3}$ \\
        PF & $1.3 \times 10^{-4}$ & $3.2 \times 10^{-4}$ & $6.4 \times 10^{-4}$ & $1.4 \times 10^{-3}$ & $3.2 \times 10^{-3}$ & $6.3 \times 10^{-3}$ \\
        SPS$\_$Th.4 & $4.6 \times 10^{-4}$ & $1.1 \times 10^{-3}$ & $2.3 \times 10^{-3}$ & $4.8 \times 10^{-3}$ & $1.1 \times 10^{-2}$ & $2.0 \times 10^{-2}$ \\
        SPS$\_$Th.5 & $0.90$ & $2.2$ & $4.6$ & $10.0$ & $22.7$ & $43.1$ 
    \end{tabular}
    \label{table2}
    \vspace{-0.4cm}
\end{table*}

\subsubsection{Accuracy}

We evaluated the accuracy while varying the values of $m$ and $\epsilon$ as in Section V.A. The results are shown in Fig. \ref{fig3}.

\begin{figure}[htbt]
  (a) \ \ \ \ \ \ \ \ \ \ \ \ \ \ \ \ \ \ \ \ \ \ \ \ \ \ \ \ \ \ \ \ \ (b)\\
  \centerline{
  \includegraphics[width=4.2cm]{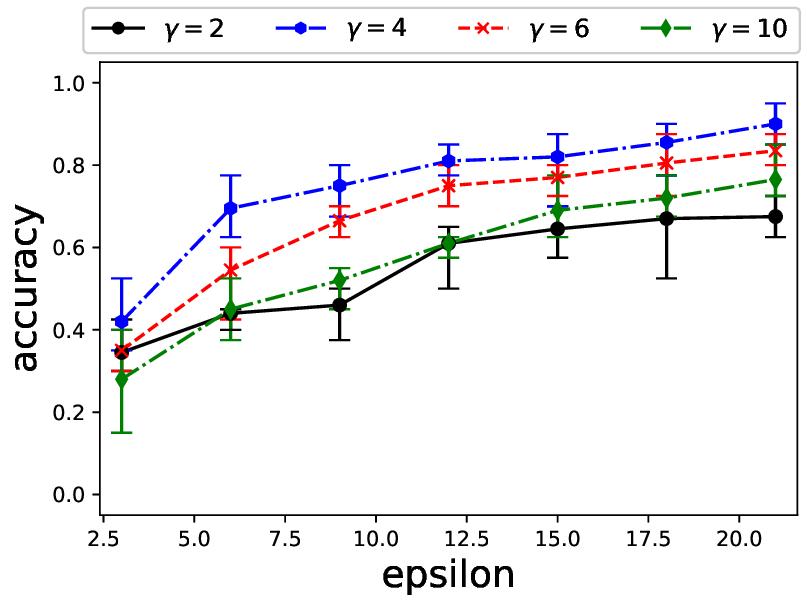}\ 
  \includegraphics[width=4.2cm]{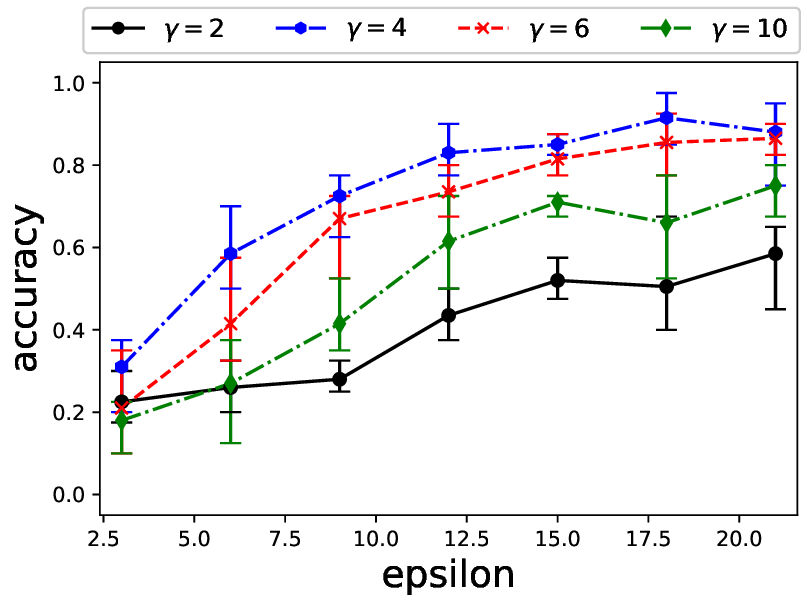}}

  (c) \ \ \ \ \ \ \ \ \ \ \ \ \ \ \ \ \ \ \ \ \ \ \ \ \ \ \ \ \ \ \ \ \ (d)\\
  \centerline{
  \includegraphics[width=4.2cm]{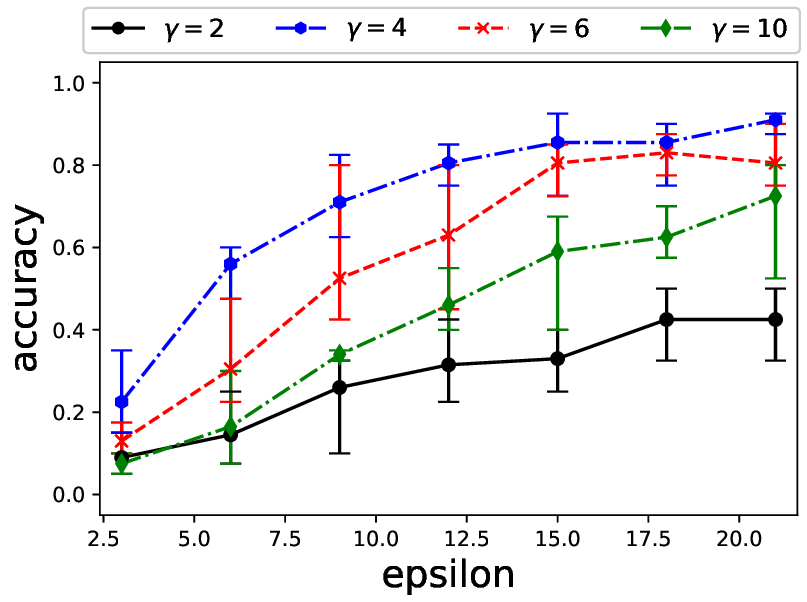}\  
  \includegraphics[width=4.2cm]{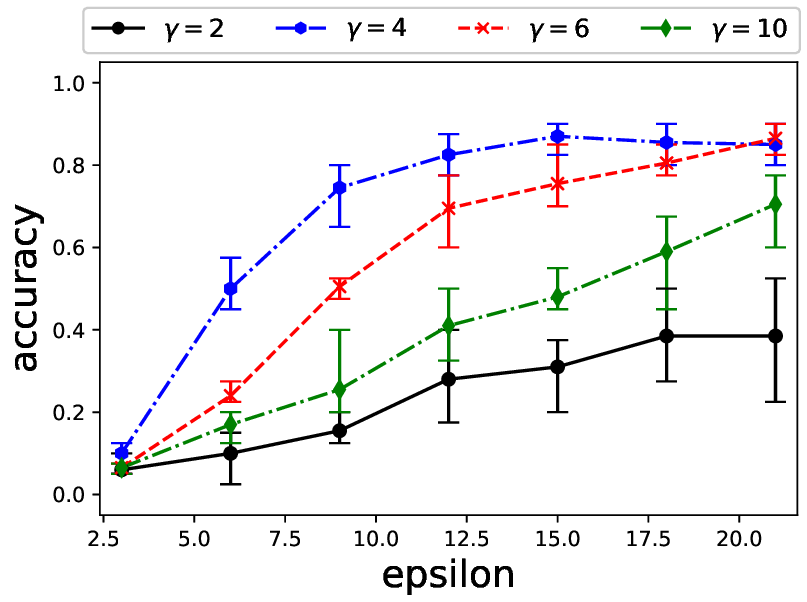}}
  \caption{Averaged accuracy when (a) $m = 5$, (b) $m = 10$, (c) $m = 15$, and (d) $m = 20$. The $x$-axis represents the achieved privacy level $\epsilon$. The $y$-axis represents the probability that the most significant SNP was correctly extracted. We compared the noise distributions with $\gamma = 2$, $4$, $6$, and $10$ using our smooth private selection. The error bar represents the range of all results in the five attempts.}
  \label{fig3}
\end{figure}

Fig. \ref{fig3} indicates that $\gamma = 4$ could be advisable for the smooth private selection. When $\gamma = 2$, the noise distribution is heavy-tailed, and a large noise is added with high probability; therefore, the accuracy of our mechanism would be poor. In contrast, when $\gamma \geq 4$, the mean and variance of the noise distribution can be defined, and having excessive noise is vary rare. In this case, the amount of noise will become larger as the value of $\gamma$ increases, mainly because the value of $\frac{1}{\alpha} = \frac{2(\gamma - 1)^\frac{\gamma-1}{\gamma}}{\epsilon}$ increases. Indeed, the accuracy was highest when $\gamma = 4$ in Fig. \ref{fig3}. \\

In this study, we briefly demonstrated that our smooth private selection has the potential to yield higher utility than the existing {\it global sensitivity}-based methods and evaluated a suitable value of $\gamma$ for the mechanism. 
However, in actual individual data analysis, various factors will influence the accuracy of the results, including the trend of $S(x)$, characteristics of the data distribution, and the possibility of other noise distributions. In particular, we expect that the error analysis of the smooth private selection for general cases would be presently too challenging, but it could be an important issue for the enrichment of differentially private mechanisms. Here, as an example, we show a condition under which the expected value of added noise in our smooth private selection is smaller than that in the permute-and-flip. 

\begin{lemma}
The expected value of added noise in the smooth private selection when using the distribution with density
\begin{eqnarray}
g(z) = \begin{cases}
    0 \ \ \ \ \ \ (z < 0)\\
    \frac{h(z)}{\int_{w \geq 0} h(w) \mathrm{d}w} \ \ (z \geq 0)
\end{cases}, \nonumber
\end{eqnarray}
where $h(z) = \frac{\sqrt{2}}{\pi (1+z^4)}$, is smaller than the expected value of added noise in the permute-and-flip if
\begin{eqnarray}
\max_{r \in \mathcal{R}} S(x,r) < \left( \frac{4}{27} \right)^{\frac{1}{4}} \cdot k \cdot GS_{u,\mathcal{R}}. \nonumber
\end{eqnarray}
\end{lemma}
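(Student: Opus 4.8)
\emph{Proof plan.} The plan is to reduce the claimed inequality to a comparison of two expectations computed in closed form. Writing $S_{\max} := \max_{r \in \mathcal{R}} S(x,r)$, the noise added to each candidate by the smooth private selection is $\frac{S_{\max}}{\alpha'} Z_r$ with $Z_r \sim g$, so its expectation equals $\frac{S_{\max}}{\alpha'} \cdot \mathbb{E}_{Z \sim g}[Z]$; for the permute-and-flip the noise added to each candidate is $\mathrm{Expo}_r\!\left( \frac{\epsilon}{2 \cdot GS_{u,\mathcal{R}}} \right)$, which has expectation $\frac{2 \cdot GS_{u,\mathcal{R}}}{\epsilon}$. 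Thus I only need the values of $\alpha'$ and $\mathbb{E}_{Z \sim g}[Z]$, after which the desired inequality becomes a short rearrangement.

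First I would instantiate Theorem 6 at $\gamma = 4$, which gives the admissibility parameter $\alpha = \frac{\epsilon}{2 \cdot 3^{3/4}}$; since the mechanism uses $\alpha' = \alpha(k\epsilon)$, this yields $\alpha' = \frac{k\epsilon}{2 \cdot 3^{3/4}}$ and hence a noise scale $\frac{S_{\max}}{\alpha'} = \frac{2 \cdot 3^{3/4}}{k\epsilon}\, S_{\max}$. Next I would evaluate the one-sided mean: using $\int_{0}^{\infty} \frac{\mathrm{d}w}{1+w^4} = \frac{\pi}{2\sqrt 2}$ gives the normalizer $\int_{w \geq 0} h(w)\,\mathrm{d}w = \frac12$, so that $g(z) = \frac{2\sqrt 2}{\pi(1+z^4)}$ for $z \geq 0$; then the substitution $u = z^2$ turns $\int_0^\infty \frac{z}{1+z^4}\,\mathrm{d}z$ into $\frac12 \int_0^\infty \frac{\mathrm{d}u}{1+u^2} = \frac{\pi}{4}$, whence $\mathbb{E}_{Z \sim g}[Z] = \frac{1}{\sqrt 2}$.

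Combining, the expected smooth-private-selection noise is $\frac{\sqrt 2 \cdot 3^{3/4}}{k\epsilon}\, S_{\max}$, and requiring this to be below $\frac{2 \cdot GS_{u,\mathcal{R}}}{\epsilon}$ is equivalent to $S_{\max} < \frac{\sqrt 2}{3^{3/4}}\, k \cdot GS_{u,\mathcal{R}}$. The last step is the arithmetic identity $\left( \frac{\sqrt 2}{3^{3/4}} \right)^{4} = \frac{4}{27}$, i.e. $\frac{\sqrt 2}{3^{3/4}} = \left( \frac{4}{27} \right)^{1/4}$, which produces exactly the stated threshold. I do not expect a genuine obstacle here; the only points requiring care are reading off the $\gamma = 4$ value of $\alpha$ from Theorem 6 correctly and tracking the one-sided normalization of $g$, both of which are routine integral evaluations.
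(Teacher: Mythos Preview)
Your proposal is correct and follows essentially the same route as the paper's own proof: instantiate Theorem~6 at $\gamma=4$ to get $\alpha' = \frac{k\epsilon}{2\cdot 3^{3/4}}$, compute $\mathbb{E}_{Z\sim g}[Z]=\frac{1}{\sqrt 2}$ (the paper just writes $\int_0^\infty z\cdot 2h(z)\,\mathrm{d}z = \frac{1}{\sqrt 2}$ without showing the substitution), compare with the exponential mean $\frac{2\,GS_{u,\mathcal{R}}}{\epsilon}$, and rearrange. The only difference is that you spell out the normalizing constant and the $u=z^2$ substitution explicitly, which the paper omits.
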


\begin{proof}
Because
\begin{eqnarray}
\int_{-\infty}^{\infty} z \cdot g(z) \mathrm{d}z = \int_{0}^{\infty} z \cdot 2h(z) \mathrm{d}z = \frac{1}{\sqrt{2}}, \nonumber
\end{eqnarray}
the expected value of $\frac{S(x)}{k \cdot \alpha} \cdot Z_r$, where 
\begin{eqnarray}
S(x) := \max_{r \in \mathcal{R}} S(x,r), \nonumber
\end{eqnarray}
is
\begin{eqnarray}
&& \frac{2 \cdot 3^{\frac{3}{4}} \cdot S(x)}{k \cdot \epsilon} \cdot \frac{1}{\sqrt{2}} = \frac{2^{\frac{1}{2}} \cdot 3^{\frac{3}{4}}}{k \cdot \epsilon} \cdot S(x). \nonumber\\
&& \ \ \ \ \ \ \ \ \ \ \ \  \left[\because \ \alpha = \frac{\epsilon}{2 \cdot (\gamma - 1)^{\frac{\gamma - 1}{\gamma}}} = \frac{\epsilon}{2 \cdot 3^{\frac{3}{4}}} \right] \nonumber
\end{eqnarray}
In contrast, in the permute-and-flip, the expected value of $\mathrm{Expo}_r \left( \frac{\epsilon}{2 \cdot GS_{u,\mathcal{R}}} \right)$ is $\frac{2 \cdot GS_{u,\mathcal{R}}}{\epsilon}$. Therefore, if
\begin{eqnarray}
&& \frac{2^{\frac{1}{2}} \cdot 3^{\frac{3}{4}}}{k \cdot \epsilon} \cdot S(x) < \frac{2 \cdot GS_{u,\mathcal{R}}}{\epsilon} \nonumber\\
&\iff& S(x) < \left( \frac{4}{27} \right)^{\frac{1}{4}} \cdot k \cdot GS_{u,\mathcal{R}}, \nonumber
\end{eqnarray}
the expected value of the added noise is smaller for the smooth private selection.
\end{proof}

A similar analysis can be conducted in other cases or when varying the value of $\gamma$. Thereafter, with reference also to our Theorems 2 and 3, we can examine a relation among $\epsilon$, $|\mathcal{R}|$, and the number of individuals, for which higher accuracy may be provided by the smooth private selection. However, this is not sufficient for the general error analysis. Particularly, the relationship between $k$ and $l$, that is, $k \cdot \alpha$ and $\max_r S(x,r)$, should be given further attention. Depending on the characteristics of data, the optimal values should differ that can minimize the added noise. We intend to advance this discussion and explore the settings of $k$ and $l$ in the future, even for deepening our Theorem 1.

\subsection{Run Time}

We further measured the run time of the mechanisms (the exponential mechanism (EM), the permute-and-flip (PF), and our smooth private selection (SPS)) while varying the value of $m$ from $20$ to $1,000$. As for our method, the noise distribution being one-sided or two-sided minimally affects the run time; therefore, we considered the two cases of using Theorems 4 and 5 to compute $S(x)$. The results over $10$ runs are shown in Table \ref{table2}.

Table \ref{table2} shows that the run time is almost proportional to $m$. Our method requires $S(x)$ in addition to the {\it global sensitivity}; therefore, the run time exceeds that of the existing methods. In particular, using Theorem 5 is more time-consuming than using Theorem 4, because the distance to $U$ that truly contains the set of $x$ such that $LS_f(x) = GS_f$ must be computed for each element. However, even when $m = 1,000$, our smooth private selection takes $< 1$ minute, indicating that it can be sufficiently fast to outperform the existing methods regarding utility, combined with the previous results. As with the accuracy, the run time inherently depends on the dataset and analysis purpose, but the results show that our method can maintain acceptable processing times.

\subsection{Example based on Real Data}
\vspace{-0.3cm}
\begin{table}[htbt]
    \centering
    \caption{The probability (\%) that the most significant SNP was correctly extracted.}
    \vspace{-0.1cm}
    \begin{tabular}{c||c|c|c|c|c|c|c}
        $\epsilon$ & $3$ & $6$ & $9$ & $12$ & $15$ & $18$ & $21$ \\ \hline
        EM & $30.0$ & $42.5$ & $42.5$ & $55.0$ & $60.0$ & $72.5$ & $82.5$  \\
        PF & $22.5$ & $40.0$ & $47.5$ & $55.0$ & $62.5$ & $77.5$ & $80.0$ \\
        SPS (I) & $12.5$ & $22.5$ & $27.5$ & $40.0$ & $45.0$ & $50.0$ & $67.5$ \\
        SPS (II) & $25.0$ & $42.5$ & $47.5$ & $60.0$ & $60.0$ & $70.0$ & $87.5$ \\
        SPS (III) & $15.0$ & $37.5$ & $42.5$ & $45.0$ & $50.0$ & $60.0$ & $82.5$ \\
        SPS (IV) & $\mathbf{32.5}$ & $\mathbf{55.0}$ & $\mathbf{65.0}$ & $\mathbf{67.5}$ & $\mathbf{77.5}$ & $\mathbf{82.5}$ & $\mathbf{92.5}$
    \end{tabular}
    \label{table1}
    \vspace{-0.1cm}
\end{table}

Finally, we evaluated the accuracy of our method based on a real result of TDT for $215$ families \cite{20}, with reference to the existing study \cite{18}. Using the six statistics provided in the paper, we measured the probability of retrieving the most significant SNP. We set $\gamma$ as $4$ for the smooth private selection, and the cases from (I) to (IV) are identical to those in Section V.A. The results over $40$ runs are shown in Table \ref{table1}.

The results in Table \ref{table1} show that for all $\epsilon$, our method using Theorem 5 and one-sided noise yielded the highest accuracy. The reason for the accuracy being higher in Case (II) than in Case (I) and in Case (IV) than in Case (III) is that the value of $l$ can be smaller for one-sided distribution than for two-sided distribution. The reason for the accuracy being higher in Case (III) than in Case (I) and in Case (IV) than in Case (II) is that Theorem 5 can provide a smaller $S(x)$ than Theorem 4. These results are consistent with those in Section V.A.
\section{Conclusion}

In this study, we proposed the first mechanism for differentially private selection using {\it smooth sensitivity}, the smooth private selection, along with theoretical proofs of the privacy guarantees. We also presented new fundamental algorithm and theorems on {\it smooth sensitivity} and expanded its potential. 
The experimental evaluation showed that our proposed mechanism can provide higher accuracy than the existing {\it global sensitivity}-based methods in privacy-preserving data analysis. Important future challenges include the following:
\begin{enumerate}
\item[i)] Developing an optimal method for determining the values of $k$ and $l$ that provide the highest accuracy for each analysis data;

\item[ii)] Conducting general theoretical analysis of our mechanism given the characteristics of the data, associated with the first challenge;

\item[iii)] Exploring possible noise distributions with a density function other than $h(z) \propto \frac{1}{1+|z|^\gamma}$;

\item[iv)] Integrating our mechanism with the joint approach \cite{14} and the local dampening mechanism \cite{16}, while developing efficient algorithms and more generalized theorems for computing {\it smooth sensitivity}. 
\end{enumerate}

\bibliographystyle{plain}
\bibliography{mybibliography}


\end{document}